\theoremstyle{plain}
\newtheorem{theorem}{Theorem}[section]
\newtheorem{lemma}[theorem]{Lemma}
\newtheorem{proposition}[theorem]{Proposition}
\newtheorem{definition}[theorem]{Definition}
\newtheorem{corollary}[theorem]{Corollary}
\theoremstyle{definition}
\newtheorem{example}{Example}
\newcommand{\bmu}{\ensuremath{\boldsymbol{\mu}}\xspace}
\newcommand{\mms}{\ensuremath{\textsc{mms}}\xspace}
\newcommand{\pmms}{\ensuremath{\textsc{pmms}}\xspace}
\newcommand{\efo}{\ensuremath{\textsc{ef}\oldstylenums{1}}\xspace}
\newcommand{\efx}{\ensuremath{\textsc{efx}}\xspace}
\newcommand{\ef}{\ensuremath{\textsc{ef}}\xspace}
\newcommand{\mysetminusD}{\hbox{\tikz{\draw[line width=0.6pt,line cap=round] (3pt,0) -- (0,6pt);}}}
\newcommand{\mysetminusT}{\mysetminusD}
\newcommand{\mysetminusS}{\hbox{\tikz{\draw[line width=0.45pt,line cap=round] (2pt,0) -- (0,4pt);}}}
\newcommand{\mysetminusSS}{\hbox{\tikz{\draw[line width=0.4pt,line cap=round] (1.5pt,0) -- (0,3pt);}}}
\newcommand{\mysetminus}{\mathbin{\mathchoice{\mysetminusD}{\mysetminusT}{\mysetminusS}{\mysetminusSS}}}
\DeclareMathOperator*{\argmin}{arg\,min}
\DeclareMathOperator*{\argmax}{arg\,max}
\title{Comparing Approximate Relaxations of Envy-Freeness}
\author{
	Georgios Amanatidis$^1$\hspace{10pt} 
	Georgios Birmpas$^2$\hspace{10pt} 
	Evangelos Markakis$^2$  \vspace{5pt}
	\\ 
	$^1$ \normalsize{Centrum Wiskunde \& Informatica (CWI), the Netherlands} \\
	$^{2}$ \normalsize{Department of Informatics, Athens University of Economics and Business, Greece} \vspace{2pt} \\
	\normalsize{\texttt{amanatid@cwi.nl\hspace{10pt}
	gebirbas@aueb.gr\hspace{10pt}
	markakis@aueb.gr}}
}
\begin{document}

\maketitle

\begin{abstract}
In fair division problems with indivisible goods it is well known that one cannot have any guarantees for the classic fairness notions of envy-freeness and proportionality. As a result, several relaxations have been introduced, most of which in quite recent works. We focus on four such notions, namely envy-freeness up to one good (\efo), envy-freeness up to any good (\efx), maximin share fairness (\mms), and pairwise maximin share fairness (\pmms). Since obtaining these relaxations also turns out to be problematic in several scenarios, approximate versions of them have been considered. 
In this work, we investigate further the connections  between the four notions mentioned above and their approximate versions. 
We establish several tight, or almost tight, results concerning the approximation quality that any of these notions guarantees for the others, providing an almost complete picture of this landscape. Some of our findings reveal interesting and surprising consequences regarding the power of these notions, e.g., \pmms and \efx provide the same worst-case guarantee for \mms, despite \pmms being a strictly stronger notion than \efx. We believe such implications provide further insight on the quality of approximately fair solutions. 
\end{abstract}

\section{Introduction}
\label{sec:intro}

In this work, we revisit fairness notions for allocating indivisible goods. 
The objective in fair division is to allocate a set of resources to a set of agents in a way that leaves everyone satisfied, according to their own preferences.
Over the years, the field has grown along various directions, with a substantial literature by now and with several applications.
We refer the reader to the surveys \cite{BCM16-survey,Procaccia16-survey,Markakis17-survey} for more recent results, and to the classic textbooks \cite{BT96,RW98,Moulin03} for an overview of the area.

To model such allocation problems, one needs to specify the preferences of the agents, and the fairness criterion under consideration. For the preferences, we stick to the usual assumption of associating  each agent with an {\em additive} valuation function on the set of resources. As for fairness criteria, one of the classic  desirable notions that have been proposed is \textit{envy-freeness}, meaning that no agent has a higher value for the bundle of another agent than for her own \cite{Foley67,Varian74}. Unfortunately, for problems with indivisible goods, this turns out to be a very strong requirement. Existence of envy-free allocations cannot be guaranteed, and the relevant algorithmic and approximability questions are also computationally hard \cite{LMMS04}.

Given these concerns, recent works have considered relaxations of envy-freeness that seem more appropriate for settings with indivisible items, see, e.g., \cite{Budish11,CKMP016}. Our work focuses on four such notions, namely {envy-freeness up to one good} (\efo), {envy-freeness up to any good} (\efx), {maximin share fairness} (\mms), and {pairwise maximin share fairness} (\pmms). 
All these capture different ways of allowing envy in an allocation, albeit in a restricted way. For instance, \efo 
requires that no agent envies another agent's bundle after removing from it her most 
valued item.

These relaxations are still no panacea, especially since existence results have either been elusive or simply negative. So far, we only know that \efo allocations always exist, whereas this is not true for \mms allocations. As a result, this has naturally led to approximate versions of these notions, accompanied by some positive algorithmic results (see related work). What has not been well charted yet, however, is the relation between these four notions and their approximate counterparts, especially concerning the approximation quality that any of these notions guarantees for the others. For example, does an \mms or an approximate \mms allocation (for which we do know efficient algorithms) provide an approximation guarantee in terms of the \efx or the \pmms criteria? As it turns out (Prop.~\ref{prop:a_mms_to_pmms}, Cor. \ref{cor:a_mms_to_efx}), it does not. As another example, while we know that \pmms implies \efx, it is not clear if an approximate \pmms allocation is also approximately \efx. In fact (Prop.~\ref{prop:a_efx_to_pmms} and \ref{prop:a_pmms_to_efx}), it is the other way around! 
Such results can be conceptually  helpful, as they deepen our understanding of the similarities and differences between fairness criteria. Furthermore, this insight allows us to either translate an approximation algorithm for one notion into an approximation algorithm for another, or to establish that such an approach cannot yield approximability or existential results. Given the growing interest in these relaxations, we find it important to further explore these interconnections. 
\medskip

\noindent {\bf Contribution.}
We investigate how the four notions mentioned above and their approximate versions are related. For each one of them, we examine the approximation guarantee that it provides in terms of the other three notions. Our results provide an almost complete mapping of the landscape, and in most cases our approximation guarantees are tight. Some of our results provide interesting and surprising consequences. Indicatively, some highlights of our results include:
\begin{itemize}[label={--},itemsep=0.5ex]
	\item \pmms and \efx allocations both provide the same constant approximation with respect to the \mms criterion.
	\item Although \pmms implies \efx, an approximate \pmms allocation may be arbitrarily bad in terms of approximate \efx. On the contrary, an approximate \efx allocation does provide some guarantee with respect to \pmms.  
	\item While \efx is a much stronger concept than \efo, they both provide comparable constant approximations for the \pmms criterion and this degrades smoothly for approximate \efx and \efo allocations.
	\item Although exact \pmms and \mms are defined in a similar manner, the former implies  \efx, \efo, and a $4/7$-approximation in terms of \mms, while the latter provides no guarantee with respect to the other notions. 
	\item Our results also suggest a simple  efficient algorithm for computing a $1/2$-approximate \pmms allocation (the current best to our knowledge),\footnote{Independently, the recent work of Barman et al.~\cite{BBMN17}, which concerns a stronger concept, also implies an efficient algorithm for computing $1/2$-\pmms allocations.} and improvements in certain special cases.  
\end{itemize}
Some of the implications between the different notions are depicted in Figure \ref{fig:implications}. 
\medskip

\noindent {\bf Related Work.} Regarding the several relaxations of envy-freeness, the notion of \efo originates in the work of Lipton et al.~\cite{LMMS04}, where both existence and algorithmic results are provided. The concept of \mms was introduced by Budish \cite{Budish11}, building on concepts of Moulin  \cite{Moulin90}, and later defined as considered here by Bouveret and Lema\^{i}tre \cite{BL16}, who studied a hierarchy of exact fairness concepts. Kurokawa, Procaccia and Wang \cite{KurokawaPW18,KPW16} showed that \mms allocations do not always exist even for three agents. From the computational point of view, $2/3$-approximation algorithms for \mms allocations are known \cite{KurokawaPW18,AMNS17,BM17} and recently this approximation factor has been improved to $3/4$ by Ghodsi et al.~\cite{SGHSY18} who also study non-additive valuation functions. Some variants of the maximin share criterion have also been considered for groups of agents \cite{Suk18}.
As for the notions of \efx and \pmms, they were introduced in the work of Caragiannis et al.~\cite{CKMP016} which provided some initial results on how these are related to each other as well as to \mms and to \efo. In particular, they established that a \pmms allocation is also an \efx, \efo and $\frac{1}{2}$-\mms allocation. 
Barman and Krishnamurty \cite{BM17} showed that when the agents agree on the ordering of the items according to their value, then \efx allocations do exist and a  specific \efx allocation  can be produced that is also a $\frac{2}{3}$-\mms allocation. 
The existence and computation of exact and approximate \efx allocations under additive and general valuations was studied by Plaut and Roughgarden \cite{PR18}, establishing the currently best $1/2$-approximation.
Finally, in a  recent work, Barman et al.~\cite{BBMN17} introduce two new fairness notions that are strongly related to the notions studied here. In particular, they introduce \emph{groupwise maximin share fairness} (\textsc{gmms}), which is a strengthening of \pmms, and \emph{envy-freeness up to one less-preferred good} (\textsc{efl}), which is  stronger than \efo but weaker than \efx. They show that \textsc{efl} allocations and $\frac{1}{2}$-\textsc{gmms} allocations always exist and can be found in polynomial time. Studying how the approximate versions of  \textsc{gmms} and  \textsc{efl} fit into the landscape explored here is an interesting direction for future work.

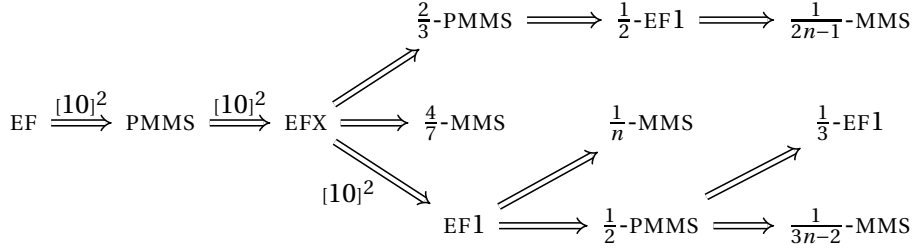
\begin{figure}[t]
	\begin{center}
			\begin{tikzcd}[column sep=normal, row sep=normal]
				& & & \frac{2}{3}\text{-\pmms} \arrow[r, Rightarrow] & \frac{1}{2}\text{-\efo} \arrow[r, Rightarrow] & \frac{1}{2n-1}\text{-\mms} \\
				\ef \arrow[r, Rightarrow, "\cite{CKMP016}\footnotemark{}"] & \pmms \arrow[r, Rightarrow,"\cite{CKMP016}\protect{\footnotemark[\value{footnote}]}"] &	\efx  \arrow[ru, Rightarrow] \arrow[rd, Rightarrow, swap, "\cite{CKMP016}\protect{\footnotemark[\value{footnote}]}"] \arrow[r, Rightarrow] & \frac{4}{7}\text{-\mms}  & \frac{1}{n}\text{-\mms} & \frac{1}{3}\text{-\efo}  \\
				& & & \efo \arrow[r, Rightarrow] \arrow[ru, Rightarrow] & \frac{1}{2}\text{-\pmms} \arrow[r, Rightarrow] \arrow[ru, Rightarrow] & \frac{1}{3n-2}\text{-\mms}
		\end{tikzcd}
		\caption{An indicative chart of implications with envy-freeness as a starting point. All implications shown are tight or almost tight.} \label{fig:implications}
	\end{center}
\end{figure}

\footnotetext{The implications of Caragiannis et al.~\cite{CKMP016} apply only to the exact versions of the corresponding notions.}

\section{Preliminaries}
\label{sec:prelim}

We assume we have a set of $n$ agents, $N = \{1, 2, \ldots, n\}$ and a set $M$ of $m$ indivisible goods. 
Following the usual assumptions in the majority of the fair division literature, each agent is associated with a monotone, {\it additive} valuation function. Hence, for every $S\subseteq M$, 
$v_i(S) = \sum_{g\in S} v_i(\{g\})$. For simplicity, we will use $v_{i}(g)$ instead of $v_i(\{g\})$ for $g\in M$. Monotonicity here implies that $v_{i}(g)\geq 0$ for every $i\in N, g\in M$.

We are interested in solutions that allocate the whole set of goods $M$ to the agents. An allocation of $M$ to the $n$ agents is therefore a partition,
$\mathcal{A} = (A_1,\ldots,A_n)$, where $A_i\cap A_j = \emptyset$ and $\bigcup_i A_i = M$.
By $\Pi_n(M)$ we denote  the set of all partitions of a set $M$ into $n$ bundles.

\subsection{Fairness Concepts}
\label{subsec:concepts}
Our work focuses on relaxations of the classic notion of envy-freeness, initially suggested by Gamow and Stern \cite{GS58}, and more formally by Foley \cite{Foley67} and Varian \cite{Varian74}.

\begin{definition}
	\label{def:EF}
	An allocation $\mathcal{A} = (A_1,\ldots,A_n)$ is envy-free (\ef), if for every $i, j\in N$, $v_i(A_i) \geq v_i(A_j)$.
\end{definition}

It is well known that envy-freeness is a very strict requirement in the presence of indivisible goods; e.g., consider any instance where all agents have large value for one specific good and negligible value for everything else. This fact gives rise to considering relaxations of envy-freeness, with the hope of obtaining more positive results.

We begin with two such relaxations, and their approximate versions, where an agent may envy some other agent, but only by an amount dependent on the value of a single item in the other agent's bundle. Formally:  
\begin{definition}
	\label{def:EF1-EFX}
	Let $\alpha \in [0,1]$.
	An allocation $\mathcal{A} = (A_1,\ldots,A_n)$ is an
	\begin{itemize}[label={--}]
		\item $\alpha$-\efo ($\alpha$-envy-free up to one good) allocation, if for every pair of agents $i, j\in N$, with $A_j\neq\emptyset$, there exists an item $g\in A_j$, such that
		$v_i(A_i) \geq \alpha\cdot v_i(A_j\setminus \{g\})$.
		\item $\alpha$-\efx ($\alpha$-envy-free up to any good) allocation, if $v_i(A_i) \geq \alpha\cdot v_i(A_j\setminus \{g\})$ holds for every pair $i, j\in N$, with $A_j\neq\emptyset$, and for every  $g\in A_j$ with $v_{i}(g) >0$.\footnote{The requirement that $v_{i}(g) >0$ in the definition of $\alpha$-\efx has been dropped by Plaut and Roughgarden \cite{PR18} but several of their results hold under the  assumption of all values being strictly positive.} 
	\end{itemize}
\end{definition}

Note that for $\alpha=1$ in the above definitions, we obtain precisely the notions of envy-freeness up to one good (\efo) \cite{Budish11} and envy-freeness up to any good (\efx) \cite{CKMP016}. The difference between these two notions is simply the quantifier regarding the item that eliminates envy when removed from an agent's bundle. Clearly, \ef implies \efx, which in turn implies \efo. 

We now move on to a different notion, also proposed by Budish \cite{Budish11}.  
Motivated by the question of what we can guarantee in the worst case to the agents, the rationale is to think of a generalization of the well-known cut-and-choose protocol to multiple agents: Suppose that agent $i$ is asked to partition the goods into $n$ bundles and then the rest of the agents choose a bundle before $i$. In the worst case, agent $i$ will be left with her least valuable bundle. Hence, a risk-averse agent would choose a partition that maximizes the minimum value of a 
bundle in the partition. This gives rise to the following definition.

\begin{definition}
	\label{def:mmshare}
	Given $n$ agents, and a subset  $S\subseteq M$ of items, the $n$-maximin share of agent $i$ with respect to $S$ is:
	\[ \bmu_i(n, S) = \displaystyle\max_{\mathcal{A}\in\Pi_n(S)} \min_{A_j\in \mathcal{A}} v_i(A_j)\,.\]
\end{definition}
From the definition, it directly follows that $n\cdot \bmu_i(n, S)\le v_i(S)$.
When $S=M$, this quantity is called the \textit{maximin share} of agent $i$. 
We say that $\mathcal{T} \in \Pi_n(M)$ is an $n$-maximin share defining partition for agent $i$, if $\min_{T_j\in \mathcal{T}} v_i(T_j) = \bmu_i(n, M)$.
When it is clear from context what $n$ and $M$ are, we will simply write $\bmu_i$ instead of $\bmu_i(n, M)$. 
The solution concept we are interested in, asks for a partition that gives each agent her (approximate) maximin share.
\begin{definition}
	\label{def:MMS}
	Let $\alpha \in [0,1]$.
	An allocation 
	$\mathcal{A} = (A_1,\ldots,A_n) $ is called an $\alpha$-\mms ($\alpha$-maximin share) allocation if $v_i(A_i)\geq \alpha\cdot \bmu_i\,$, for every $i\in N$.
\end{definition} 

The last notion we define is related but not directly comparable to \mms. The idea is that instead of imagining an agent $i$ partitioning the items into $n$ bundles, we think of $i$ as partitioning the combined bundle of herself and another agent into two bundles and receiving the one she values less. 

\begin{definition}
	\label{def:PMMS}
		Let $\alpha \in [0,1]$.
	An allocation $\mathcal{A} = (A_1,\ldots,A_n) $ is called an $\alpha$-\pmms ($\alpha$-pairwise maximin share) allocation if for every pair of agents $i, j\in N$, 
	\[v_i(A_i)\geq \alpha\cdot \! \max_{(B_1, B_2)} \! \min \{v_i(B_1), v_i(B_2) \}\,,\] 
	where $(B_1, B_2)\in \Pi_2(A_i\cup A_j)$.
\end{definition} 

In Definitions \ref{def:MMS} and \ref{def:PMMS}, when $\alpha=1$, we refer to the corresponding allocations as \mms and \pmms allocations respectively.
It has been already observed that the notion of \pmms is stronger than \efx  \cite{CKMP016}. 

\begin{example}
	\label{ex:illustration}
	To illustrate the relevant definitions, let us consider an instance with 3 agents and 5 items: 
	\begin{center}
		\begin{tabular}{@{} l | *5l @{}}    
			& $a$ & $b$ & $c$ & $d$ & $e$\ \  \\\toprule
			\ Agent 1 & $3$ & $1$ & $1$ & $1$ & $4$\ \  \\ 
			\ Agent 2 & $4$ & $3$ & $3$ & $1$ & $4$\ \  \\ 
			\ Agent 3 & $3$ & $2$ & $1$ & $3$ & $4$\ \  \\\bottomrule
			
		\end{tabular}
	\end{center}\smallskip
	
	If $M = \{a, b, c, d, e\}$ is the set of items, one can see that $\bmu_1(3, M) = 3$, $\bmu_2(3, M) = 4$, $\bmu_3(3, M) = 4$. For example, looking at agent $1$, there exists a partition so that the worst bundle is worth a value of $3$, and there is no partition where the worst bundle is better.
	Similarly, agent $2$ can produce a partition where the worst bundle has a value of $4$ for her, and there is no other partition that can guarantee a better worst-case value.
	
	Let us examine the allocation $\mathcal{A} = (\{e\}, \{b, c\}, \{a, d\})$. Clearly, this is an \ef allocation, and hence it is also an \efx, \efo, \mms, and \pmms allocation. Consider now the allocation $\mathcal{B} = (\{a\}, \{b, e\}, \{c, d\})$. This is no longer \ef and it is also neither \efx nor \pmms. However, it is an \efo as well as an \mms allocation. Clearly, each agent $i$ receives a value of at least $\bmu_i(3, M)$. To see why it is \efo, note that agents 1 and 3 envy agent 2 but removing item $e$ from the bundle of agent $2$ eliminates the envy from either agent. We can also see that $\mathcal{B}$ is a $\frac{3}{4}$-\efx allocation. To verify this, we can look at agent 1, and compare the value of her bundle to the bundle of agent $2$ when we remove either item $b$ or $e$. Finally, it is not hard to check that $\mathcal{B}$ is also a $\frac{3}{4}$-\pmms allocation.        
\end{example}

\section{EFX and EF1 Allocations}
\label{sec:EFX-EF1}

We begin our exposition with the more direct relaxations of envy-freeness, \efo and \efx. Within this section, we always start with either an $\alpha$-\efo or $\alpha$-\efx allocation, for some $\alpha\in (0, 1]$, and explore the implications and fairness guarantees that can be derived. We also pay particular attention to the case of exact \efx or \efo allocations, i.e., for $\alpha=1$. 

We already know that \efx is stronger than \efo. Our first warm-up proposition states that this also holds in an approximation preserving sense. We complement this by the fact that \efo allocations cannot provide any approximation to \efx. 
\begin{proposition}\label{prop:efx_to_efo}
	For $n\ge 2$, any $\alpha$-\efx allocation is also an $\alpha$-\efo allocation for any $\alpha \in (0,1]$. On the other hand, an \efo allocation is not necessarily an $\alpha$-\efx allocation for any $\alpha \in (0,1]$.
\end{proposition}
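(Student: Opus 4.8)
The plan is to treat the two directions separately. For the approximation-preserving direction, I would fix an $\alpha$-\efx allocation $\mathcal{A}=(A_1,\dots,A_n)$ and an ordered pair $i,j$ with $A_j\neq\emptyset$, and exhibit the single item whose removal certifies $\alpha$-\efo. The natural witness is $g^\ast\in\argmax_{g\in A_j} v_i(g)$. If $v_i(g^\ast)>0$, then $g^\ast$ has positive value to $i$, so the defining inequality of $\alpha$-\efx applied to $g^\ast$ yields exactly $v_i(A_i)\ge \alpha\,v_i(A_j\setminus\{g^\ast\})$, which is what $\alpha$-\efo requires. If $v_i(g^\ast)=0$, then every item of $A_j$ is worthless to $i$, hence $v_i(A_j\setminus\{g\})=0$ for any $g\in A_j$ and the inequality holds trivially by monotonicity ($v_i(A_i)\ge 0$). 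As the pair was arbitrary, $\mathcal{A}$ is $\alpha$-\efo. I expect no real difficulty here.

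For the negative direction I would first pin down the precise meaning, since one cannot hope for a single \efo allocation that fails $\alpha$-\efx simultaneously for every $\alpha\in(0,1]$. Indeed, if every bundle had positive value to its owner, the finitely many ratios $v_i(A_i)/v_i(A_j\setminus\{g\})$ would admit a positive minimum, so the allocation would be $\alpha$-\efx for all sufficiently small $\alpha$; a ``universal'' counterexample would therefore force $v_i(A_i)=0$ for the relevant owner. But exact \efo between $i$ and the bundle witnessing the \efx-violation would then push all of that bundle's value (as seen by $i$) into one item, which is precisely the configuration in which $\alpha$-\efx holds automatically. Hence the statement I would actually prove is: for every $\alpha\in(0,1]$ there exists an \efo allocation that is not $\alpha$-\efx.

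The construction I would use is a three-item, two-agent instance tuned to $\alpha$ (and padded to any $n\ge 2$ by giving each extra agent a private item valued only by her). Set $A_1=\{a\}$ and $A_2=\{b,c\}$, with agent $1$'s values $v_1(a)=v_1(c)=s$ and $v_1(b)=1$, where $s\in(0,\alpha)$, say $s=\alpha/2$; let agent $2$ value $b,c$ at $1$ and $a$ at $0$, so she never envies. The allocation is \efo: for the only nontrivial pair, removing agent $1$'s most valued item $b$ from $A_2$ leaves $v_1(\{c\})=s=v_1(A_1)$. It is not $\alpha$-\efx, because removing the positive-value item $c$ from $A_2$ leaves $v_1(\{b\})=1$, while $v_1(A_1)=s<\alpha=\alpha\cdot v_1(\{b\})$. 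Padding to larger $n$ preserves both properties, since the extra items are worthless to agents $1$ and $2$ and the extra agents are envy-free.

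The main obstacle is conceptual rather than computational: recognizing that the ``zero approximation'' phenomenon cannot be realized by a single fixed allocation (the \efo constraint at a zero-valued bundle collapses the other bundle to one valuable item, which is automatically $\alpha$-\efx), so the negative result must be delivered through an $\alpha$-parametrized family in which the small value $s$ is driven below the target threshold $\alpha$.
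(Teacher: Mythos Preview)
Your proof is correct and follows essentially the same approach as the paper: the forward direction picks a single item from $A_j$ as the \efo witness (you use $\argmax$ where the paper uses $\argmin$ over positive-value items, both valid), and the negative direction uses the same three-item, two-agent construction in which a bundle contains one ``large'' and one ``small'' item, with the ratio pushed to zero by a parameter (you shrink $s$, the paper grows $V$). Your extra discussion on why no single allocation can fail $\alpha$-\efx for all $\alpha$ simultaneously is a nice clarification of the statement's quantifier order that the paper leaves implicit.
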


\begin{proof}
	Start with an $\alpha$-\efx allocation $\mathcal{A} = (A_1, \ldots, A_n)$. This means that for any $i,j \in N$ we have that $v_i(A_i) \geq \alpha\cdot v_j(A_j\mysetminus g)$, where $g \in \argmin_{h\in A_j, v_i(h)>0}v_i(h)$. However this also means that for any $i,j \in N$, there exists some $g \in A_j$,  such that $v_i(A_i) \geq \alpha\cdot v_j(A_j\mysetminus g)$. Thus $\mathcal{A}$ is also an $\alpha$-\efo allocation.
	
	Now regarding the second part of the statement, consider the following simple example: Suppose that we have $2$ agents and $3$ items, $g_1, g_2, g_3$. The agents have identical values over the items, specifically $v_i(g_1)=V,v_i(g_2)=1$ and $v_i(g_3)=\alpha$, for  $i \in \{1,2\}$, $\alpha \in (0,1]$ and $V\gg 1$. We now consider the allocation $\mathcal{A}=(\{g_1,g_2\}, \{g_3\})$. It is easy to see that this is an $\alpha$-\efo allocation but also an $\frac{1}{V}$-\efx allocation, and $\frac{1}{V}$ can be made arbitrarily small. 
	By adding any number of dummy agents and an equal number of dummy items, so that  $g_1, g_2, g_3$ have no value for the dummy agents and the $j$th dummy item has nonzero value only for the $j$th dummy agent,
	this instance can be easily generalized to any number of agents.
\end{proof}

    Notice that an equivalent way of stating the $\efo \not\Rightarrow \alpha\text{-}\efx$ part of Proposition \ref{prop:efx_to_efo} would be: \emph{an $\alpha$-\efo allocation is not necessarily a $\beta$-\efx allocation for any $\alpha,\beta \in (0,1]$}.

Before proving guarantees in terms of \mms and \pmms, we state a useful technical lemma which generalizes both (the $k=1$ case of) Lemma 1  of Bouveret and Lema\^{i}tre \cite{BL16} and Lemma 3.4 of Amanatidis et al.~\cite{AMNS17}.
   In its simplest form, the lemma states that removing any item, together with any agent, does not decrease the maximin share of the remaining agents. In general, it allows to remove appropriately chosen subsets of items, while reducing the number of agents by 1, so that  the maximin share of a specific agent does not decrease.

\begin{lemma}\label{lem:monotonicity}
	Suppose $\mathcal{T} \in \Pi_n(M)$ is an $n$-maximin share defining partition for agent $i$, i.e., $\bmu_i(n, M) = \min_{T_j\in \mathcal{T}} v_i(T_j)$. Then, for any set of goods $S$, such that there exists some $j$ with $S \subseteq T_j$, it holds that 
	\[\bmu_i(n-1, M\mysetminus S) \geq \bmu_i(n, M)\,.\]
	In particular, for any item $g$, $\bmu_i(n-1, M\mysetminus \{g\}) \geq \bmu_i(n, M)$.
\end{lemma}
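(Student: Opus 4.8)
The plan is to prove the general statement by exhibiting a single, explicit partition of $M \setminus S$ into $n-1$ bundles in which every bundle has value at least $\bmu_i(n,M)$ for agent $i$. Since $\bmu_i(n-1, M\setminus S)$ is by definition the maximum, over all $(n-1)$-partitions of $M \setminus S$, of the minimum bundle value, producing even one partition whose minimum bundle value is $\ge \bmu_i(n,M)$ suffices to conclude $\bmu_i(n-1, M\setminus S) \ge \bmu_i(n,M)$.

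Concretely, suppose $S \subseteq T_k$ for the index $k$ guaranteed by the hypothesis. I would start from the maximin-defining partition $\mathcal{T} = (T_1,\dots,T_n)$ and modify it in two steps: first delete the items of $S$ from $T_k$, which leaves the (possibly empty) set $T_k \setminus S$; then pick any other index $\ell \ne k$ (such an $\ell$ exists since $n \ge 2$) and append the leftover items to that bundle, replacing $T_\ell$ by $T_\ell \cup (T_k \setminus S)$. Dropping the now-emptied slot $k$, the resulting collection $(T_1,\dots,T_{k-1},T_{k+1},\dots,T_n)$ with $T_\ell$ enlarged is a genuine partition of $M \setminus S$ into exactly $n-1$ bundles.

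It then remains to bound the value of the worst bundle in this partition. Every bundle $T_j$ with $j \notin \{k,\ell\}$ is untouched, so $v_i(T_j) \ge \min_{T \in \mathcal{T}} v_i(T) = \bmu_i(n,M)$. For the enlarged bundle, additivity and monotonicity give $v_i(T_\ell \cup (T_k \setminus S)) = v_i(T_\ell) + v_i(T_k \setminus S) \ge v_i(T_\ell) \ge \bmu_i(n,M)$, where the first inequality uses $v_i(T_k \setminus S) \ge 0$. Hence the minimum bundle value of the constructed partition is at least $\bmu_i(n,M)$, which is exactly what is needed. The final sentence of the lemma is the special case $S = \{g\}$: every single item lies in some bundle $T_j$ of $\mathcal{T}$, so the hypothesis is automatically met.

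I do not expect a genuine obstacle here; the proof is a direct construction. The one point that requires a little care is that one must \emph{not} simply discard the leftover items $T_k \setminus S$ when deleting bundle $k$, since that would fail to partition all of $M \setminus S$; reattaching them to another bundle is what keeps the collection a valid partition, and this reattachment is harmless precisely because valuations are nonnegative. The role of the hypothesis $S \subseteq T_j$ is likewise essential: it ensures that removing $S$ affects only one bundle of $\mathcal{T}$, so that merging a single residual set into one neighbour is enough to drop exactly one bundle while leaving all other bundles intact.
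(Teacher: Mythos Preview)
Your proof is correct and follows essentially the same approach as the paper: both start from the maximin-defining partition, remove the bundle containing $S$, and redistribute the leftover items $T_k\setminus S$ among the remaining $n-1$ bundles, noting that by nonnegativity this can only increase their values. The only cosmetic difference is that the paper distributes the leftover items ``in an arbitrary way'' across the surviving bundles, whereas you dump them all into a single chosen bundle $T_\ell$; either choice works for the same reason.
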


\begin{proof}
	Let us look at agent $i$, and consider a partition of $M$ that attains her maximin share.
	Let $\mathcal{T} = (T_1, \ldots,T_n)$ and assume, without loss of generality, that $S \subseteq T_n$. Consider the remaining partition $(T_1,\ldots,T_{n-1})$ enhanced in an arbitrary way by the items of $T_n \mysetminus S$.
	This is an $(n-1)$-partition of $M\mysetminus S$, where the value of agent $i$ for any bundle is at least $\bmu_i(n, M)$. Thus, 
	we have $\bmu_i(n-1, M\mysetminus S) \geq \bmu_i(n, M)$.
\end{proof}

Next we study \efx allocations in terms of the \mms guarantees they can provide, starting with the case of a small number of agents.

\begin{proposition}\label{prop:efx_to_mms_n=2,3}
	For $n\in\{2, 3\}$, an \efx allocation is also a $\frac{2}{3}$-\mms allocation. 
	Moreover, this guarantee is tight.
\end{proposition}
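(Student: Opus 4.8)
The plan is to fix an arbitrary agent $i$ and prove $v_i(A_i) \ge \tfrac23\bmu_i$; by symmetry over agents this gives the statement. Throughout I would write $a = v_i(A_i)$ and, for $j\neq i$, let $m_j = v_i(g_j)$ be the value of a minimum positively-valued item $g_j\in A_j$. Two consequences of \efx do the bulk of the work: the numerical bound $v_i(A_j) \le a + m_j$ (immediate from $a \ge v_i(A_j\setminus g_j)$), and the structural fact that each foreign bundle $A_j$ contains at most one item worth more than $a$ to $i$ --- for if it held two, deleting the single item $g_j$ would leave an item of value $>a$, contradicting \efx. On the \mms side I would use $\bmu_i \le v_i(M)/n$ together with the refined bound $\bmu_i \le \tfrac12\bigl(v_i(M)-v_i(g)\bigr)$, valid for every item $g$ when $n=3$; the latter follows from Lemma~\ref{lem:monotonicity} (which gives $\bmu_i \le \bmu_i(2,M\setminus\{g\})$) combined with $2\,\bmu_i(2,\cdot)\le v_i(\cdot)$, or directly from the observation that in any $3$-partition at least two bundles avoid $g$.

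For $n=2$ the argument is short. If $a \ge v_i(A_j)$ then $a \ge v_i(M)/2 \ge \bmu_i$. Otherwise agent $i$ envies $j$, and I split on $A_j$: if $A_j$ has a single positively-valued item then $\bmu_i \le v_i(M)-m_j = a$; if it has at least two, then $m_j \le v_i(A_j)/2$, so \efx gives $a \ge v_i(A_j)/2$, whence $v_i(M)\le 3a$ and $a \ge v_i(M)/3 \ge \tfrac23\bmu_i$.

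The case $n=3$ is the heart of the proof, and I would argue by contradiction: suppose $\bmu_i > \tfrac32 a$. Then $v_i(M) \ge 3\bmu_i > \tfrac92 a$, so the two foreign bundles satisfy $v_i(A_j)+v_i(A_k) > \tfrac72 a$, and the \efx cap forces $m_j+m_k > \tfrac32 a$; without loss of generality $m_j > \tfrac34 a$. Since $g_j$ is the smallest positive item of $A_j$, every positive item there exceeds $\tfrac34 a$, and the cap $v_i(A_j)\le a+m_j$ then limits $A_j$ to at most two positive items (three would force $m_j\le a/2$). If $A_j$ is a single item $g$, the bound $\bmu_i\le\tfrac12(v_i(M)-v_i(g))$ gives $\bmu_i\le \tfrac12(a+v_i(A_k))$, a contradiction unless $v_i(A_k)>2a$; but then $A_k$ too is a single item of value $>2a$ (all its items exceed $a$, so by the structural fact there is only one), and some maximin bundle must avoid both singletons, forcing its value to be $\le a<\bmu_i$ --- again a contradiction. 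Hence $A_j$ is a two-item bundle with both items $\le a$, so $v_i(A_j)\le 2a$ and therefore $v_i(A_k)>\tfrac72 a - 2a=\tfrac32 a$; the same reasoning applied to $A_k$ shows it is likewise a two-item bundle with both items $\le a$ (a singleton $A_k$ of value $>\tfrac32 a$ would contradict the bound $\bmu_i\le\tfrac12(a+v_i(A_j))\le\tfrac32 a$). Now there are exactly four foreign items, each $\le a$, while $i$'s own items total only $a$. In the maximin partition each of the three bundles has value $>\tfrac32 a$, so each must contain at least one foreign item (a bundle with none consists only of $i$'s items and is worth $\le a$); with four foreign items this forces a $2{+}1{+}1$ pattern, and the two bundles holding a single foreign item each need more than $\tfrac32 a - a = \tfrac12 a$ worth of $i$'s items, totalling more than $a$. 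This exceeds the total value $a$ of $i$'s items, the desired contradiction. Hence $\bmu_i \le \tfrac32 a$, i.e.\ $v_i(A_i)\ge \tfrac23\bmu_i$.

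For tightness I would exhibit three agents with identical valuations and six items worth $2,2,2,2,1,1$, allocated as $(\{1,1\},\{2,2\},\{2,2\})$: this is \efx, $\bmu_1=3$ (three bundles cannot all exceed $3$ since the values are integral and sum to $10$), and $v_1(A_1)=2=\tfrac23\bmu_1$; the same idea with four items $2,2,1,1$ handles $n=2$. The main obstacle is exactly what this example exposes: the generic averaging estimates on $\bmu_i$ --- even the refined $\tfrac12(v_i(M)-v_i(g))$ bound from Lemma~\ref{lem:monotonicity} --- are by themselves too weak, because here $\bmu_i$ sits strictly below $v_i(M)/3$. One is therefore forced to reason about the discrete, ``lumpy'' structure of the maximin partition directly, which is precisely the counting step carried out above.
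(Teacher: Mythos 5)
Your proof is correct, and it in fact does more than the paper's own write-up: the paper proves only the $n=2$ case in detail and dismisses $n=3$ as ``slightly more complex but of similar flavor,'' whereas you supply a complete argument for both. Your $n=2$ case is essentially the paper's argument (the \efx cap $v_i(A_j)\le v_i(A_i)+m_j$, the bound $m_j\le\tfrac12 v_i(A_j)$ when $A_j$ has two positive items, and $v_i(M)\ge n\bmu_i$), merely phrased contrapositively rather than by contradiction. The $n=3$ case is where you genuinely diverge, and your closing remark is the right diagnosis: pure averaging via $\bmu_i\le v_i(M)/3$ cannot give $\tfrac23$ (bounding each foreign bundle by $2v_i(A_i)$ yields only $\tfrac35$, and your own tightness instance has $\bmu_1$ strictly below $v_1(M)/3$), so one must exploit the discrete structure of the maximin partition. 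Your chain of deductions checks out: the caps force $m_j+m_k>\tfrac32 v_i(A_i)$; a large minimum item limits a foreign bundle to at most two positive items; a singleton foreign bundle triggers the refined bound $\bmu_i\le\tfrac12\bigl(v_i(M)-v_i(g)\bigr)$ (a valid consequence of Lemma~\ref{lem:monotonicity}) and leads to a contradiction via a maximin bundle avoiding both singletons; otherwise both foreign bundles have exactly two positive items, each of value at most $v_i(A_i)$ (removing the minimum item leaves the maximum, so \efx bounds it), and the final $2{+}1{+}1$ counting argument inside the maximin-defining partition overdraws the value of $A_i$. This technique---counting items of bounded value within a maximin-defining partition---is closer in spirit to the paper's proof of Proposition~\ref{prop:efx_to_mms} for $n\ge 4$ than to its own $n=2$ argument, and it legitimately fills the gap the paper leaves. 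Your tightness examples ($2,2,1,1$ for $n=2$ and $2,2,2,2,1,1$ for $n=3$) match the paper's construction and generalize it correctly.
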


\begin{proof}
	We prove the statement for $n=2$. The proof for the case of three agents is slightly more complex but of similar flavor. 
	
	Suppose that we have  an \efx allocation $\mathcal{A}=(A_1, A_2)$.
	We show that for agent 1 we have $v_1(A_1) \ge \frac{2}{3} \bmu_1(2, M)$. The case of agent 2 is symmetric. Since items of no value for agent 1 are completely irrelevant for her view of both \efx and \mms allocations, we can assume without loss of generality, that $v_1(g)>0$ for all $g\in M$. 
	
	In the sequel, we write $\bmu_1$ instead of $\bmu_1(n, M)$. If $A_2=\emptyset$, then clearly $v_1(A_1) = v_1(M) \ge \bmu_1$, so we may assume $|A_2|\ge 1$. If $|A_2| = \{g\}$, then $v_1(A_1) = v_1(M\mysetminus\{g\}) = \bmu_1(1, M\mysetminus\{g\})\ge \bmu_1(2, M) =\bmu_1$, where the  inequality follows from Lemma \ref{lem:monotonicity}. 
	The remaining case is when $|A_2|\geq 2$. Suppose, towards a contradiction, that $v_1(A_1)< \frac{2}{3}\bmu_1$. Since $\mathcal{A}$ is an \efx allocation, we have that 
	\begin{equation}\label{eq_efx}
	v_1(A_2) \leq v_1(A_1) + v_{1}(g)\,, 
	\end{equation}
	where $g \in\argmin_{h\in A_2} v_1(h)$.
	Since $A_2$ contains at least two items, we have 
	\begin{equation}\label{eq_card}
	v_{1}(g) \leq \frac{1}{2}v_1(A_2)\,. 
	\end{equation}
	Combining \eqref{eq_efx} and \eqref{eq_card} we get $v_{1}(g) \leq v_1(A_1)$. Again by \eqref{eq_efx}, this implies that  $v_1(A_2)<\frac{4}{3}\bmu_1$. But then, $v_1(M)=v_1(A_1)+v_1(A_2)<\frac{6}{3}\bmu_1=2\bmu_1$, a contradiction, since by definition, we know that $v_i(M) \geq n\cdot \bmu_i$ for every $i\in N$.

	Regarding tightness, consider the following simple example. Suppose that we have $2$ agents and $4$ items, $a, b,c, d$. The agents have identical values over the items, specifically $v_i(a)=v_i(b)=2$ and $v_i(c)=v_i(d)=1$, for  $i \in \{1,2\}$. We now look at the allocation $\mathcal{A}=(\{a,b\}, \{c,d\})$. It is easy to see that $\bmu_1 = \bmu_2 = 3$ and that $\mathcal{A}$ an \efx allocation. However $v_2(\{c,d\}) = 2 = \frac{2}{3}\bmu_2$. By adding an arbitrary number of copies of agent $2$ and an equal number of items of value $3$, this instance can be generalized to any number of agents.
\end{proof}

Beyond three agents, the picture gets somewhat more complicated. The next result is one of the main  highlights of our work. When moving from the case of $n\leq 3$, to a higher number of agents, the approximation guarantee achieved, in terms of \mms, degrades from $2/3$ to a quantity between $4/7=0.5714$ and $0.5914$. Surprisingly, the same happens to the \mms guarantee of a \pmms allocation, as we show in the next section.

It is interesting to note  that recently Barman and Krishnamurty \cite{BM17} obtained a simple $2/3$-appro\-xi\-mation algorithm for \mms by showing that when agents agree on the ordering of the values of the items, then there exists a particular \efx allocation that is also a $2/3$-\mms allocation. As indicated by the proof of the following proposition, this is not true for all \efx allocations, even when the agents are identical.

\begin{proposition}\label{prop:efx_to_mms}
	For $n\ge 4$, any \efx allocation is also a $\frac{4}{7}$-\mms allocation. On the other hand, an \efx allocation is not necessarily an $\alpha$-\mms allocation for $\alpha > \frac{8}{13}$ and, as $n$ grows large, for $\alpha > 0.5914$.
\end{proposition}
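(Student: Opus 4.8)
The plan is to treat the positive direction ($\efx \Rightarrow \tfrac47$-$\mms$) and the two impossibility bounds separately. For the upper bound I would fix an agent $i$, discard all items with $v_i(g)=0$ (irrelevant to both notions), and write $x = v_i(A_i)$. The basic consequence of $\efx$ is that for every $j$ with $A_j \neq \emptyset$ we have $v_i(A_j) \le x + v_i(g_j)$, where $g_j = \argmin_{h \in A_j} v_i(h)$. The first structural step I would establish is that any item of value exceeding $x$ can only sit in a non-$i$ singleton: if $|A_j|\ge 2$ and $A_j$ contained such an item, then $A_j \setminus \{g_j\}$ would still retain it, violating $\efx$; and $A_i$ cannot contain one since $v_i(A_i)=x$. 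Using Lemma \ref{lem:monotonicity} I would then repeatedly delete every non-$i$ singleton and every empty bundle (one item and one agent at a time), which never decreases $\bmu_i$. After this reduction we may assume every bundle other than $A_i$ has at least two items and every item has value at most $x$, so each such bundle satisfies $v_i(A_j)\le 2x$; and since $\bmu_i$ has only grown, it suffices to prove $v_i(A_i)\ge\tfrac47\bmu_i$ in the reduced instance.

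The counting then splits by the number of remaining agents $n'$. Since the total value of the remaining items is at least $n'\bmu_i$ while the non-$i$ bundles contribute at most $2x$ each, I get $x \ge \frac{n'}{2n'-1}\bmu_i$, and $\frac{n'}{2n'-1}\ge \tfrac47$ exactly when $n'\le 4$, so the cases $n'\in\{2,3,4\}$ are immediate. The hard case, and the main obstacle, is $n'\ge 5$, where this crude bound degrades toward $\tfrac12$ and one must genuinely exploit the $\mms$ structure. Assuming for contradiction $x<\tfrac47\bmu_i$, so $\bmu_i>\tfrac74 x$ with all items $\le x$, I would refine the estimate by classifying the non-$i$ bundles (size $2$ contributing up to $2x$, size $\ge 3$ only up to $\tfrac32 x$) and the items by their value relative to $x$: a bundle whose value is near $2x$ forces two near-$x$ items, whereas every $\mms$-defining bundle must reach value $>\tfrac74 x$ out of items $\le x$. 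Balancing the number of such high-value items against the $n'$ bundles the $\mms$ partition has to fill is what should yield the contradiction, and calibrating these thresholds to land exactly on $\tfrac47$ (rather than something weaker) is the delicate point; the fact that the guarantee is only \emph{almost} tight suggests this calibration is not perfectly clean.

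For the impossibility side I would build explicit instances with identical agents, generalizing the tight $\tfrac23$ example behind Proposition \ref{prop:efx_to_mms_n=2,3}: there the $\efx$ allocation groups equal items together (e.g.\ $\{2,2\}$ and $\{1,1\}$) while the $\mms$ partition mixes them ($\{2,1\},\{2,1\}$) to raise the minimum. To push the ratio below $\tfrac23$ for $n\ge 4$ I would use several distinct value-classes and, crucially, let the $\efx$ allocation and the $\mms$ partition use bundles of different sizes, since any construction in which both partitions use size-two bundles is pinned at $\tfrac23$ by the $\efx$ inequality $v_i(A_j)\le x+v_i(g_j)$. I would exhibit one finite instance realizing the ratio $\tfrac{8}{13}$ and a parametric family whose ratio, after optimizing the value-classes as $n\to\infty$, tends to $0.5914$. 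For each instance the work is to verify $\efx$ (each non-$i$ bundle minus its least-valued item has value at most the minimum bundle's value) and to compute $\bmu_i$ honestly, accounting for the indivisibility of the items; pinning down the exact optima $\tfrac{8}{13}$ and $0.5914$ is where the remaining effort lies.
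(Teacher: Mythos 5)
Your opening reduction (discarding zero-value items, deleting non-$i$ singletons via Lemma~\ref{lem:monotonicity}, concluding every remaining item has value at most $x=v_i(A_i)$ and every other bundle at most $2x$) is sound and matches the paper's first step, and your crude count correctly settles $n'\le 4$. But the heart of the proposition is precisely $n'\ge 5$, and there your text is a plan, not a proof --- and the plan as stated cannot be completed by counting inside the allocation alone. If $b$ is the number of ``bad'' bundles (size two, both items of value above $x/2$), your classification gives $n'\bmu_i \le v_i(M') \le \bigl(\tfrac{3n'}{2}-\tfrac12+\tfrac{b}{2}\bigr)x$, and nothing in the \efx structure bounds $b$: all $n'-1$ non-$i$ bundles can be bad, and plugging $b=n'-1$ just reproduces $\tfrac{n'}{2n'-1}$, which is strictly below $\tfrac47$ for $n'\ge 5$. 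The missing idea --- the crux of the paper's argument --- is to control the number of bad \emph{items} not in the allocation but in an $\mms$-defining partition $\mathcal{T}$ of agent $i$: whenever some bundle of $\mathcal{T}$ contains \emph{two} bad items, delete that pair of items \emph{and} one agent; Lemma~\ref{lem:monotonicity}, applied with $S$ equal to the pair (which lies inside a single bundle of $\mathcal{T}$), guarantees $\bmu_i$ does not decrease, and iterating yields a reduced instance whose defining partition holds at most one bad item per bundle, hence at most $t\le n'$ surviving bad items. Only then does the count close: the surviving items are the good bundles plus $t$ bad items, each bad item worth at most $x$ by \efx, so $n'\bmu_i\le v_i(M')\le t\,x+\bigl(n'-\tfrac{t}{2}-1\bigr)\tfrac32 x+x\le \tfrac{7n'-2}{4}\,x$, giving $v_i(A_i)\ge \tfrac{4n'}{7n'-2}\bmu_i\ge\tfrac47\bmu_i$ uniformly in $n'$ --- no case split and no ``delicate threshold calibration'' is needed. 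Your intuition that the $\mms$ partition must be exploited is right, but without this pairing-and-removal mechanism the ``balancing'' step has no engine.

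The impossibility side is likewise unexecuted: you state that suitable instances exist but construct neither. For $\tfrac{8}{13}$ the paper's witness is four identical agents and twelve items, four each of values $\tfrac18,\tfrac12,1$, with allocation $(\{$four $\tfrac18$'s and one $\tfrac12\},\{\tfrac12,\tfrac12,\tfrac12\},\{1,1\},\{1,1\})$; this is \efx, $\bmu_1=1+\tfrac12+\tfrac18=\tfrac{13}{8}$, and agent $1$ gets value $1$. More importantly, $0.5914$ is not a number you would land on by ``optimizing value-classes'': it is $\bigl(1+\tfrac12+\tfrac16+\tfrac1{42}+\tfrac1{1806}\bigr)^{-1}$, coming from Sylvester's sequence, and the paper does not re-derive it here at all --- it imports it from the \pmms construction of Proposition~\ref{prop:pmms_to_mms} via the implication \pmms $\Rightarrow$ \efx (Proposition~\ref{prop:pmms_to_efx}). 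So both halves of your proposal identify the right phenomena but stop short of the arguments that actually establish the stated constants.
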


\begin{proof}
	Let $\mathcal{A} = (A_1, \ldots, A_n)$ be an \efx allocation. Suppose, towards a 
	contradiction, that $\mathcal{A}$ is not a $\frac{4}{7}$-\mms allocation, i.e., 
	there exists some  $j$ so that $v_j(A_j)< \frac{4}{7} \bmu_j(n, M)$.
	Without loss of generality, we assume that agent $1$ is such an agent, and write 
	$\bmu_1$ instead of $\bmu_1(n, M)$. 
	Note that we may remove any agent, \mbox{other} than agent 1, that receives exactly one good, 
	and still end up with an \efx allocation that is not a $\frac{4}{7}$-\mms allocation. 
	Indeed, if $|A_i|=1$ for some $i>1$, then $(A_1, \ldots, A_{i-1}, A_{i+1}, \ldots, A_n)$ 
	is an \efx allocation of $M\mysetminus A_i$ to $N\mysetminus\{i\}$ and, 
	by Lemma \ref{lem:monotonicity}, $\bmu_1(n-1, M\mysetminus A_i) \geq \bmu_1(n, M)$. 
	Thus, $v_1(A_1)< \frac{4}{7} \bmu_1(n-1, M\mysetminus A_i)$.
	Therefore, again without loss of generality, we may assume that $|A_i|>1$ for all $i>1$ 
	in the initial allocation $\mathcal{A}$. 
	
	Given $\mathcal{A}$, we say that a bundle $A_j$ is \textit{bad} if $j>1$, $|A_j|=2$, and 
	$\min_{g\in A_j}v_1(g)> \frac{1}{2}v_1(A_1)$. An item is bad if it belongs to a bad bundle, while a  bundle is \textit{good} if it is not bad. Let $B$ be the set of all bad items.
	
	It is not hard to show that if $A_i$ is good, then $v_1(A_1) \ge \frac{2}{3}v_1(A_i)$. 
	When $i = 1$ this is straightforward; otherwise we consider two cases. 
	First assume that $|A_i| = 2$. Then, by definition, we have $\min_{g\in A_i}v_1(g)\le \frac{1}{2}v_1(A_1)$ and, since $\mathcal{A}$ is \efx, we have $\max_{g\in A_i}v_1(g)\le v_1(A_1)$. Thus, $v_1(A_i)\le \frac{3}{2} v_1(A_1)$.
	Next, assume that $|A_i| \ge 3$. Then, $\min_{g\in A_i}v_1(g) \le \frac{1}{3} v_1(A_i)$ and, since $\mathcal{A}$ is \efx, we have $v_1(A_i) - \min_{g\in A_i}v_1(g) \le v_1(A_1)$. Thus, we get $v_1(A_1) \ge v_1(A_i) - \frac{1}{3} v_1(A_i) = \frac{2}{3} v_1(A_i)$.
	
	Now we are going to show that $v_1(A_1) \ge \frac{4}{7} \bmu_1(n', M')$ for a reduced instance that we get by possibly removing some of the items of $B$, i.e., bad items. We do so in a way that ensures that $\bmu_1(n', M')\ge \bmu_1$, thus contradicting  the choices of $\mathcal{A}$ and $A_1$. 
	We consider an $n$-maximin share defining partition $\mathcal{T}$ for agent $1$, i.e., $\min_{T_i\in \mathcal{T}} v_1(T_i) = \bmu_1$. 
	If there is a bundle of $\mathcal{T}$ containing two items of $B$, $g_1$, $g_2$, then we remove those two items and reduce the 
	number of agents by one. By Lemma \ref{lem:monotonicity}, we have that $\bmu_1(n-1, M\mysetminus \{g_1, g_2\}) \ge \bmu_1$.
	We repeat as many times as necessary to get a reduced instance with $n'\le n$ agents and a set of 
	items $M'\subseteq M$ for which there is an $n'$-maximin share defining partition $\mathcal{T'}$ for agent $1$, such that no bundle contains more than $1$ item of $B$.  By repeatedly using Lemma \ref{lem:monotonicity}, we have $\bmu_1(n', M') \geq \bmu_1$. 
	
	Let $x$ be the number of items of $B$ in the reduced instance (recall that these items are defined with respect to $\mathcal{A}$, hence they belong to a bundle of size $2$ in $\mathcal{A}$ and have value at least $\frac{1}{2}v_1(A_1)$ for agent $1$). Clearly, $x$ cannot be greater than $n'$, or some bundle of $\mathcal{T'}$ would contain at least $2$ bad items. Further, if $|B| = y$, i.e., the number of bad items in the original instance, then we know that the number of good bundles in $\mathcal{A}$ was $n-\frac{y}{2}$, and that the number of agents was reduced $\frac{y-x}{2}$ times, i.e., $n'= n-\frac{y-x}{2}$. Hence, we may express the number of good bundles in the original instance in terms of $n'$ and $x$ only, as $n'-\frac{x}{2}$.
	
	Recall that $\bmu_1 \le \bmu_1(n', M')$ and, by the definition of maxi\-min share, $\bmu_1(n', M') \le \frac{1}{n'}v_1(M')$. Thus 
	\begin{equation}\label{eq:mms<prop}
	\bmu_1 \le \frac{1}{n'}v_1(M')\,.
	\end{equation}
	In order to upper bound $v_1(M')$, notice that $M'$ contains all the items of all the good bundles of $\mathcal{A}$ plus $x$ bad items. As discussed above, the value (with respect to agent 1) of each good bundle is upper bounded by $\frac{3}{2} v_1(A_1)$. On the other hand, $\mathcal{A}$ being \efx implies that $\max_{g\in A_i}v_1(g)\le v_1(A_1)$ for any bad bundle $A_i$. That is, any bad item's value is upper bounded by $v_1(A_1)$. So, we have
	\begin{IEEEeqnarray}{rCl}
		v_1(M') & \le & x\cdot v_1(A_1) + \left( n'-\frac{x}{2} -1\right) \frac{3}{2} v_1(A_1) + v_1(A_1) \nonumber\\
		& = & \left( \frac{3n'}{2}+\frac{x}{4} -\frac{1}{2}\right)  v_1(A_1) \nonumber\\
		& \le & \left( \frac{3n'}{2}+\frac{n'}{4} -\frac{1}{2}\right)  v_1(A_1) \nonumber\\
		& = & \frac{7n'-2}{4}  v_1(A_1)\,. \label{eq:4/7_prop}
	\end{IEEEeqnarray} 
	
	We combine inequalities \eqref{eq:mms<prop} and \eqref{eq:4/7_prop} to get $v_1(A_1) \ge \frac{4n'}{7n'-2} \bmu_1\ge \frac{4}{7} \bmu_1$, which contradicts the choices of $\mathcal{A}$ and $A_1$.	
	
	This establishes the positive part of the theorem. To see that \efx does not imply anything stronger than $\frac{8}{13}$-\mms, notice that for $n'=n=4$ the above analysis can be tight. On one hand, $\frac{4n'}{7n'-2} = \frac{8}{13}$ in this case, while on the other hand the following instance indicates that this is the best one can guarantee for 4 agents. Suppose that we have 12 items with the following values for agent 1: 
	\[ 
	v_1(g_i)= \left\{
	\begin{array}{ll}
	1/8 & 1\le i \le 4 \\
	1/2 & 5 \le i\le 8\\
	1 & 9 \leq i\leq 12 \\
	\end{array} 
	\right. 
	\]
	It is not hard to see that $\bmu_1 = 1 + \frac{1}{2} + \frac{1}{8} =\frac{13}{8}$ in this instance. Now consider the allocation $\mathcal{A} = (\{g_1, \ldots, g_5\}, \allowbreak \{g_6, g_7, g_8\}, \allowbreak \{g_9, g_{10}\}, \allowbreak \{g_{11}, \allowbreak  g_{12}\})$. Assuming that agents $2$, $3$ and $4$ are identical to  agent $1$, 
	it is not hard to check that this is an \efx allocation. Yet, $v_1(A_1) = 1 = \frac{8}{13} \bmu_1$. By adding an arbitrary number of copies of agent $4$ and her bundle, this instance can be generalized to any number of agents.
	
	The stronger bound for $\alpha$ as $n$ grows large follows by Propositions \ref{prop:pmms_to_efx} and \ref{prop:pmms_to_mms}.
\end{proof}

By following a similar analysis as in the proof of Proposition \ref{prop:efx_to_mms_n=2,3}, it can be shown that for any number of agents $\alpha$-\efx implies 
at least $\frac{\alpha}{2}$-\mms. The actual guarantee, moreover, cannot be much better than this.

\begin{proposition}\label{prop:a_efx_to_mms}
	For $n\geq 2$ and $\alpha \in (0,1)$, any $\alpha$-\efx allocation is also an $\frac{\alpha n}{\alpha+2n-2}$-\mms allocation but not necessarily a $\beta$-\mms allocation, for any $\beta> \frac{2\alpha}{2+\alpha}$. For $n\ge 4$, the upper bound is improved to $\max\{\frac{\alpha}{1+\alpha}, \frac{8\alpha}{11+2\alpha}\}$. 
\end{proposition}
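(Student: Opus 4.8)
The plan is to treat the positive direction (every $\alpha$-\efx allocation is $\frac{\alpha n}{\alpha+2n-2}$-\mms) and the two negative (tightness) constructions separately, following the template of Proposition~\ref{prop:efx_to_mms_n=2,3} but in a form uniform in $n$ and $\alpha$. For the positive part I would argue by contradiction: relabel so that agent~$1$ is an agent with $v_1(A_1)<\frac{\alpha n}{\alpha+2n-2}\bmu_1$, and observe that items of zero value to agent~$1$ affect neither her $\alpha$-\efx constraints nor $\bmu_1$, so I may assume $v_1(g)>0$ for all $g$. As in Proposition~\ref{prop:efx_to_mms}, I would then repeatedly delete every \emph{other} agent that holds a singleton bundle, each time invoking Lemma~\ref{lem:monotonicity} to ensure $\bmu_1$ does not decrease; this yields a reduced instance with $n'\le n$ agents in which every bundle $A_j$, $j\ne 1$, satisfies $|A_j|\ge 2$.

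For each such $A_j$, letting $g_j$ be a minimum-value item of $A_j$, the $\alpha$-\efx condition gives $v_1(A_1)\ge\alpha\bigl(v_1(A_j)-v_1(g_j)\bigr)$, while $|A_j|\ge 2$ gives $v_1(g_j)\le\frac12 v_1(A_j)$; combining these yields $v_1(A_j)\le\frac{2}{\alpha}v_1(A_1)$. Summing over the $n'-1$ remaining bundles and adding $v_1(A_1)$ gives $v_1(M')\le\frac{\alpha+2n'-2}{\alpha}\,v_1(A_1)$, and since $v_1(M')\ge n'\bmu_1(n',M')\ge n'\bmu_1$, this rearranges to $v_1(A_1)\ge\frac{\alpha n'}{\alpha+2n'-2}\bmu_1$. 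The only remaining observation is that $t\mapsto\frac{\alpha t}{\alpha+2t-2}$ is decreasing for $\alpha<2$, so shrinking from $n$ to $n'$ only strengthens the bound; hence $v_1(A_1)\ge\frac{\alpha n}{\alpha+2n-2}\bmu_1$, contradicting the choice of agent~$1$.

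For the negative side, the general bound $\frac{2\alpha}{2+\alpha}$ is exactly the $n=2$ value of the positive formula, so I expect it to be attained by a two-agent instance with identical agents: four items of values $\tfrac\alpha2,\tfrac\alpha2,1,1$ and the allocation $(\{\tfrac\alpha2,\tfrac\alpha2\},\{1,1\})$. Here the balanced split certifies $\bmu_1=\frac{2+\alpha}{2}$, while $v_1(A_1)=\alpha=\frac{2\alpha}{2+\alpha}\bmu_1$, and the single effective $\alpha$-\efx constraint (agent~$1$ against $\{1,1\}$) holds with equality; adding copies of one agent together with its bundle extends this to every $n\ge2$. For the sharper $n\ge4$ bound I would generalize the $\frac{8}{13}$ instance of Proposition~\ref{prop:efx_to_mms}: use three value classes $s\le m\le\ell$ with four items each, so that the one-per-class balanced partition is proportional and thus certifies $\bmu_1=s+m+\ell$, and take the allocation $(\{4s,m\},\{3m\},\{2\ell\},\{2\ell\})$ that concentrates the small items on agent~$1$. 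The binding $\alpha$-\efx constraints are $v_1(A_1)=4s+m\ge\alpha\ell$ and $4s+m\ge2\alpha m$; minimizing $\frac{v_1(A_1)}{\bmu_1}=\frac{4s+m}{s+m+\ell}$ over $s,m,\ell$ subject to these produces two regimes --- the second constraint is slack and the infimum is $\frac{\alpha}{1+\alpha}$ (as $s\to0$) when $\alpha\le\frac12$, and it binds, giving $\frac{8\alpha}{11+2\alpha}$, when $\alpha\ge\frac12$ --- which is exactly $\max\{\frac{\alpha}{1+\alpha},\frac{8\alpha}{11+2\alpha}\}$. Copies of the last agent extend this to all $n\ge4$.

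The part requiring the most care is the $n\ge4$ construction. That the balanced partition attains $\bmu_1$ is immediate (it equals the proportional share $\frac1n v_1(M)$), so the real work is (a) checking that \emph{every} agent's $\alpha$-\efx constraints hold at the chosen values --- not just agent~$1$'s view of the two large bundles, but also the middle agent's view of agent~$1$'s spread-out bundle --- and (b) carrying out the ratio minimization carefully enough to see that the active constraint switches exactly at $\alpha=\frac12$, yielding the two-branch maximum. I would finish by double-checking these constraints at the optimal parameters (e.g.\ $s=\tfrac{(2\alpha-1)m}{4}$ in the binding regime) to confirm feasibility.
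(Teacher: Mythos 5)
Your proposal is correct and, for the positive bound and the $\frac{2\alpha}{2+\alpha}$ tightness instance, coincides with the paper's proof: the same reduction via Lemma~\ref{lem:monotonicity} to non-singleton bundles, the same per-bundle estimate $v_1(A_j)\le\frac{2}{\alpha}v_1(A_1)$, and the same two-agent, four-item example (you are in fact more careful than the paper in noting that $t\mapsto\frac{\alpha t}{\alpha+2t-2}$ is decreasing, so passing from $n$ to $n'\le n$ agents only strengthens the bound). The only divergence is the $n\ge4$ construction: the paper gives agent 1 a bundle of $k$ tiny items of total value $\alpha$, alongside one bundle of three items of value $\frac12$ and two bundles of two items of value $1$, and needs $k\to\infty$ to reach $\frac{8\alpha}{11+2\alpha}$; your three-class family $(\{4s,m\},\{3m\},\{2\ell\},\{2\ell\})$ has the same structure but, for $\alpha\ge\frac12$, attains the bound exactly with $12$ items (normalizing $m=\tfrac12$ gives $s=\frac{2\alpha-1}{8}$, $\ell=1$, $\bmu_1=\frac{11+2\alpha}{8}$, $v_1(A_1)=\alpha$), and your explicit ratio minimization cleanly explains why the active constraint --- and hence the branch of the $\max$ --- switches at $\alpha=\tfrac12$. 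Both constructions are valid; yours buys exact attainment and a derivation (rather than a verification) of the two-branch bound, while the paper's avoids any case analysis on which constraint binds.
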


\begin{proof}
	Let $\mathcal{A} = (A_1, \ldots, A_n)$ be an $\alpha$-\efx allocation and focus on agent 1. By Lemma \ref{lem:monotonicity}, we may assume, without loss of generality, that for every $i \geq 2$ we have that $|A_i| \geq 2$. 
	Thus, for all  $i \geq 2$, there exists some $g_i \in \argmin_{h \in A_i} v_1(h)$ so that $v_1(A_1) \geq \alpha (v_1(A_i)-v_1(g_i))$ and $v_1(g_i) \leq v_1(A_i \mysetminus g_i)$. Then 
	\begin{IEEEeqnarray*}{rCl}
		v_1(A_i) & \leq & \frac{1}{\alpha}v_1(A_1)+v_1(g_i) \leq \frac{1}{\alpha}v_1(A_1)+v_1(A_i \mysetminus g_i) \\
		& \leq & \frac{1}{\alpha}v_1(A_1) +\frac{1}{\alpha}v_1(A_1)= \frac{2}{\alpha}v_1(A_1)\,.
	\end{IEEEeqnarray*} 
	So, we have that 
	\begin{IEEEeqnarray*}{rCl}
		n\bmu_1 & \leq & v(M)= v(A_1)+ \sum_{i\geq2}v(A_i ) \\
		& \leq & v(A_1)+ \sum_{i\geq2}\frac{2}{\alpha}v_1(A_1)= \frac{\alpha+ 2(n-1)}{\alpha} v_1(A_1)\,,
	\end{IEEEeqnarray*} 
	and finally we can conclude 
	\[v_1(A_1) \geq \frac{\alpha n}{\alpha+2n-2}\bmu_1\,.\]
	
	Regarding the upper bound, fix any $\alpha \in (0,1)$ and suppose that we have $2$ agents and $4$ items: $a, b,c, d$. The agents have identical valuation functions, in particular $v_i(a)=v_i(b)=1$ and $v_i(c)=v_i(d)=\frac{\alpha}{2}$, for  $i \in \{1,2\}$. We now look at the allocation $\mathcal{A}=(\{a,b\}, \{c,d\})$. It is easy to see that this an $\alpha$-\efx allocation which, however, is only an $\frac{2\alpha}{2+\alpha}$-\mms allocation.
	
	For the case of $n \geq 4$, consider the following example. We define an instance with $n=4$ identical agents and $k+7$ items. Let $\alpha \in (0,1]$. For every agent $i$ we have 
	\[ 
	v_i(g_j)= \left\{
	\begin{array}{ll}
	\frac{\alpha}{k} & 1 \leq j\leq k\\
	\frac{1}{2} & k+1 \leq j\leq k+3\\
	1 & k+4 \leq j\leq k+7\\
	\end{array} 
	\right. 
	\]
	Let $\mathcal{A} = (A_1, \ldots, A_4)=(\{g_1,g_2,\ldots,g_k\}, \{g_{k+1}, g_{k+2}, g_{k+3},\}, \{g_{k+4},g_{k+5}\},\{g_{k+6},g_{k+7}\})$. It is easy to see that this is an $\alpha$-\efx allocation which is only an $\frac{\alpha}{1+\alpha}$-\mms allocation when $\alpha<\frac{1}{2}$, and  is only an $\frac{8\alpha}{11+2\alpha}$-\mms allocation when $\alpha\geq\frac{1}{2}$ and $k$ is large.
	
	It is not hard to see that both instances above can be generalized to any number of agents.
\end{proof}

In contrast to \efx, $\alpha$-\efo allocations provide a much weaker approximation of maximin shares, namely a ratio of $O\big(\frac{1}{n}\big)$ for constant $\alpha$. The following proposition generalizes a result of Caragiannis et al.~\cite{CKMP016}.\footnote{The bound by Caragiannis et al.~\cite{CKMP016} is for $\alpha=1$ and  for allocations that are both \efo and Pareto optimal. It follows by their proof, however, that it holds even when  Pareto optimality is dropped.}

\begin{proposition}\label{prop:efo_to_mms}
	For $n\geq 2$,  any $\alpha$-\efo allocation is also an $\frac{\alpha}{n-1+\alpha}$-\mms allocation for any $\alpha \in (0,1]$, and this is tight.
\end{proposition}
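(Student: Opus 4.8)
The plan is to fix a single agent, say agent $1$, prove $v_1(A_1) \ge \frac{\alpha}{n-1+\alpha}\bmu_1$ (writing $\bmu_1 = \bmu_1(n,M)$), and note that the argument for every other agent is identical. The starting point is to rewrite the $\alpha$-\efo guarantee in the strongest possible per-pair form. For a fixed $j \ge 2$ with $A_j \neq \emptyset$, letting $g_j \in \argmax_{h \in A_j} v_1(h)$ and $M_j = v_1(g_j)$, the existence of \emph{some} $g$ with $v_1(A_1) \ge \alpha\, v_1(A_j \setminus \{g\})$ is equivalent to removing the top item, i.e.\ $v_1(A_1) \ge \alpha\,(v_1(A_j) - M_j)$. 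Summing this over all $j \ge 2$ (empty bundles contribute nothing) immediately controls the ``non-top'' mass of every other bundle: $\sum_{j\ge 2}\bigl(v_1(A_j) - M_j\bigr) \le \frac{n-1}{\alpha}\, v_1(A_1)$.

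The main obstacle is that \efo places \emph{no} constraint on the top items $M_j$ — a single arbitrarily valuable item handed to another agent satisfies \efo for free — so the crude estimate $\bmu_1 \le \frac1n v_1(M)$ is hopelessly weak here and the top items must be bounded some other way. The key idea is to strip them off with Lemma~\ref{lem:monotonicity}. Each $g_j$ is a single item, hence lies inside one bundle of any maximin-share defining partition, so the lemma applies with $S=\{g_j\}$ and removes $g_j$ while lowering the agent count by one and not decreasing the maximin share. Iterating over $g_2,\dots,g_n$ (these are distinct, since the $A_j$ are pairwise disjoint) collapses the instance to a single agent, whose maximin share is simply the value of what is left:
\[
v_1(M) - \sum_{j \ge 2} M_j \;=\; \bmu_1\bigl(1,\, M \setminus \{g_2,\ldots,g_n\}\bigr) \;\ge\; \bmu_1(n,M) \;=\; \bmu_1 ,
\]
which yields the crucial bound $\sum_{j\ge 2} M_j \le v_1(M) - \bmu_1$ on the otherwise-uncontrolled top items.

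Combining the two estimates then finishes the positive direction. Splitting $v_1(M) - v_1(A_1) = \sum_{j\ge2} M_j + \sum_{j\ge2}(v_1(A_j)-M_j)$ and substituting both bounds gives $v_1(M) - v_1(A_1) \le (v_1(M) - \bmu_1) + \frac{n-1}{\alpha} v_1(A_1)$, and the $v_1(M)$ terms cancel, leaving $\bmu_1 \le \bigl(1 + \frac{n-1}{\alpha}\bigr) v_1(A_1)$, i.e.\ $v_1(A_1) \ge \frac{\alpha}{n-1+\alpha}\bmu_1$. (If some $A_j$ with $j\ge2$ is empty, one strips only the $k \le n-1$ top items that exist, stopping at $n-k$ agents and using $\bmu_1(n-k,\cdot) \le \frac{1}{n-k}v_1(\cdot)$; the resulting factor $\frac{\alpha(n-k)}{\alpha+k}$ is decreasing in $k$, hence at least $\frac{\alpha}{n-1+\alpha}$, so empty bundles only help.)

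For tightness I would use $n$ identical agents where agents $2,\dots,n$ each receive one ``big'' item of value $1$ plus tiny filler of total value $\frac{1}{n-1+\alpha}$, and agent $1$ receives only filler of total value $\frac{\alpha}{n-1+\alpha}$. A short calculation gives $v_1(M)=n$ and a balanced partition (each big item alone, all filler pooled into one bundle) of value exactly $1$ per part, so $\bmu_1 = 1$; removing the big item from each $A_j$ leaves exactly $\frac{1}{n-1+\alpha} = \tfrac1\alpha v_1(A_1)$, making the allocation $\alpha$-\efo with equality, while $v_1(A_1) = \frac{\alpha}{n-1+\alpha}\bmu_1$. I expect the only delicate point of the whole argument to be the repeated application of Lemma~\ref{lem:monotonicity}; the rest is routine once the top items are removed.
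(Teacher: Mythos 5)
Your proof is correct, including the empty-bundle case and the tightness construction, and its skeleton matches the paper's: sum the per-pair \efo inequalities (taking the removed item to be each bundle's most valuable item) and then control the total value of those removed items by $v_1(M)-\bmu_1$. The one place where you genuinely diverge is in how that control is established. The paper gets it in one line by pigeonhole: the $n-1$ items $g_j$ cannot meet all $n$ bundles of a maximin-share defining partition, so some bundle avoids them all; that bundle has value at least $\bmu_1$ (it belongs to the defining partition) and at most $v_1(M)-\sum_{j\neq 1}v_1(g_j)$, which is exactly the needed inequality. You instead iterate Lemma~\ref{lem:monotonicity} $n-1$ times, stripping one top item and one agent per step until a single agent remains, where the maximin share equals the residual value $v_1(M)-\sum_{j\geq 2}M_j$. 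Both routes are valid and yield the identical inequality; the pigeonhole argument is shorter and self-contained, while your iteration reuses machinery the paper needs anyway (the same lemma drives Propositions~\ref{prop:efx_to_mms_n=2,3} and~\ref{prop:efx_to_mms}) and has the side benefit of making explicit the case of $k<n-1$ nonempty bundles, which the paper dismisses as ``straightforward'': your bound $\frac{\alpha(n-k)}{\alpha+k}\geq\frac{\alpha}{n-1+\alpha}$ quantifies precisely why empty bundles only help. Your tightness example also differs cosmetically from the paper's (which hands each other agent an item of huge value $V\gg 1$ plus items worth $1$ after removal), but both constructions do the same thing: make every \efo inequality for agent~1 hold with equality while keeping her maximin share as large as possible.
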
 

\begin{proof}
	Let $\mathcal{A} = (A_1, \ldots, A_n)$ be an $\alpha$-\efo allocation and fix some agent $i$.  The case where at least one of the bundles is empty is straightforward, so we may assume that the bundles are all non-empty. For all agents $j 
	\in N\mysetminus\{i\}$,  there exists an item $g_{j}\in A_j$ such that  $v_i(A_i)\geq \alpha(v_i(A_j)- v_i(g_{j}))$. By summing up over $N\mysetminus\{i\}$ and adding $\alpha v_i(A_i)$ to both sides, we have   
	\[(n-1)v_i(A_i) +\alpha v_i(A_i) \geq \alpha\bigg( v_i(A_i) + \sum_{j \neq i}v_i(A_j)- \sum_{j \neq i }v_i(g_{j}) \bigg) =  \alpha\bigg( v_i(M) - \sum_{j \neq i }v_i(g_{j}) \bigg)\,,\] 
	and thus,
	\[v_i(A_i) \geq \frac{\alpha\big(v_i(M)- \sum_{j \neq i}v_i(g_{j})\big)}{n-1+\alpha} \geq \frac{\alpha\bmu_i}{n-1+\alpha}\,.\]
	To see why the last inequality holds, notice that in any $n$-maximin share defining partition for agent $i$, the bundle that does not contain any of the $g_{j}$s,  has value at most  $v_i(M)- \sum_{j \neq i}v_i(g_{j})$; therefore, $\bmu_i$ is also upper bounded by this quantity.
	
	Regarding tightness, consider the following example. We have an instance with $n \geq 2$ identical agents and $3n-2$ items. Let $\alpha \in (0,1)$ and $V\gg 1$. For every agent $i$ we have 
	\[ 
	v_i(g_j)= \left\{
	\begin{array}{lll}
	\alpha & j=1 \text{ or } j=3k-1, &1 \leq k\leq n-1\\
	1-\alpha &j=3k, &1 \leq k\leq n-1 \\
	V &j=3k+1, &1 \leq k\leq n-1  \\
	\end{array} 
	\right. 
	\]
	Now consider $\mathcal{A} = (A_1, \ldots, A_n)=(\{g_1\}, \{g_2, g_3, g_4\}, \{g_5,g_6, g_7\},\ldots,\{g_{3n-4},g_{3n-3}, g_{3n-2}\})$. It is easy to see that this is an $\alpha$-\efo allocation which is, however, only an $\frac{\alpha}{n-1+\alpha}$-\mms allocation.
\end{proof}

Finally, we investigate the implications that can be derived for \efx and \efo allocations, in terms of \pmms guarantees. Despite Proposition \ref{prop:efx_to_efo} suggesting that (approximate) \efo is much weaker than (approximate) \efx, the two notions give comparable guarantees with respect to \pmms. In particular, the guarantee implied by $\alpha$-\efx can be at most $4/3$ times  the guarantee implied by $\alpha$-\efo.

\begin{proposition}\label{prop:a_efx_to_pmms}
	For $n\ge 2$, any $\alpha$-\efx allocation is also a $\frac{2\alpha}{2+\alpha}$-\pmms allocation for any $\alpha \in (0,1]$, and this is tight.
\end{proposition}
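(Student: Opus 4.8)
The plan is to verify the \pmms inequality for each \emph{ordered} pair of agents separately. Fix an $\alpha$-\efx allocation $\mathcal{A}=(A_1,\dots,A_n)$ and a pair $i,j$, and write $\bmu_i(2, A_i\cup A_j)$ for the pairwise maximin share of $i$ over the combined bundle, i.e.\ the quantity $\max_{(B_1,B_2)}\min\{v_i(B_1),v_i(B_2)\}$ appearing in Definition~\ref{def:PMMS}. Since items of zero value to $i$ affect none of the relevant quantities, I would first discard them, so that every remaining item of $A_j$ has $v_i>0$. The goal then reduces to showing $v_i(A_i)\ge \frac{2\alpha}{2+\alpha}\,\bmu_i(2,A_i\cup A_j)$, equivalently $\bmu_i(2,A_i\cup A_j)\le \frac{2+\alpha}{2\alpha}v_i(A_i)$. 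The starting point is the generic bound $2\,\bmu_i(2,A_i\cup A_j)\le v_i(A_i)+v_i(A_j)$ (the $n=2$ instance of the inequality $n\cdot\bmu_i(n,S)\le v_i(S)$ noted after Definition~\ref{def:mmshare}), so it suffices to control $v_i(A_j)$ in terms of $v_i(A_i)$.

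The main case is when $A_j$ contains at least two items of positive value to $i$. Here I would let $g^\ast$ be a minimum-value such item; then $v_i(g^\ast)\le v_i(A_j\setminus g^\ast)$, and $\alpha$-\efx applied to the pair $(i,j)$ with the item $g^\ast$ gives $v_i(A_i)\ge \alpha\,v_i(A_j\setminus g^\ast)$. Combining, $v_i(A_j)=v_i(A_j\setminus g^\ast)+v_i(g^\ast)\le 2v_i(A_j\setminus g^\ast)\le \tfrac{2}{\alpha}v_i(A_i)$, and plugging this into the generic bound yields exactly $\bmu_i(2,A_i\cup A_j)\le \tfrac12\big(v_i(A_i)+\tfrac{2}{\alpha}v_i(A_i)\big)=\tfrac{2+\alpha}{2\alpha}v_i(A_i)$, as desired. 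The subtle step -- the one I expect to be the main obstacle -- is the case where $A_j$ has at most one item of positive value, because there the \efx constraint becomes vacuous (removing the single item leaves an empty bundle, so it imposes nothing). The way around this is not to use \efx at all but to argue directly on the $2$-partition: if $A_j$ has a single valuable item $g$, then in any split $(B_1,B_2)$ of $A_i\cup A_j$ one of the two parts avoids $g$ and is therefore a subset of $A_i$, so its value is at most $v_i(A_i)$; hence $\bmu_i(2,A_i\cup A_j)\le v_i(A_i)\le \tfrac{2+\alpha}{2\alpha}v_i(A_i)$. The case $v_i(A_j)=0$ is immediate, since then $\bmu_i(2,A_i\cup A_j)\le \tfrac12 v_i(A_i)$.

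For tightness I would reuse the two-agent instance from Proposition~\ref{prop:a_efx_to_mms}: two agents with identical values $v(a)=v(b)=1$ and $v(c)=v(d)=\tfrac{\alpha}{2}$, together with the allocation $(\{a,b\},\{c,d\})$. This is $\alpha$-\efx, with the constraint tight for agent $2$. Crucially, with only two agents the pairwise maximin share of agent $2$ over $A_1\cup A_2=M$ coincides with $\bmu_2(2,M)=1+\tfrac{\alpha}{2}$, while $v_2(A_2)=\alpha$, giving the ratio $\tfrac{\alpha}{1+\alpha/2}=\tfrac{2\alpha}{2+\alpha}$ exactly. Adding copies of agent $2$ with disjoint copies of their items extends this to arbitrary $n$ while preserving $\alpha$-\efx, showing the factor cannot be improved.
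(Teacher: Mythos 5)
Your proof is correct and follows essentially the same route as the paper's: the same split into the cases $|A_j|\le 1$ and $|A_j|\ge 2$, the same combination of the \efx inequality with $v_i(g^\ast)\le \tfrac12 v_i(A_j)$ to get $v_i(A_j)\le \tfrac{2}{\alpha}v_i(A_i)$, the same use of $2\bmu_i(2,A_i\cup A_j)\le v_i(A_i\cup A_j)$, and an equivalent tight instance (a ``poor'' agent holding two items of value $\tfrac{\alpha}{2}$ against agents holding two items of value $1$). Your explicit discarding of items with $v_i=0$ is a small extra dose of rigor the paper glosses over, but it does not change the argument.
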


\begin{proof}
Let $\mathcal{A} = (A_1, \ldots, A_n)$ be an $\alpha$-\efx allocation and consider agents $i$ and $j$. We focus on agent $i$. Notice  that when $|A_j|=1$ we have $v_i(A_i) \geq \bmu_i(2,A_i \cup A_j)$ by the definition of maximin share; so assume that $|A_j| \geq 2$. If $g \in \argmin_{h \in A_j}v_i(h)$, then we have $v_i(A_i)\geq \alpha(v_i(A_j)- v_i(g))$ but also $v_i(g)\le 0.5 v_i(A_j)$. Combining these two inequalities we get $v_i(A_i)\geq \frac{\alpha}{2}v_i(A_j)$.
Thus 
\[2\bmu_i(2,A_i \cup A_j) \leq v_i(A_i \cup A_j) =v_i(A_i)+v_i(A_j) \leq  \frac{2+\alpha}{\alpha}v_i(A_i)\,,\]
which directly implies 
\[v_i(A_i) \geq \frac{2\alpha}{2+\alpha}\bmu_i(2,A_i \cup A_j)\,.\]

Regarding tightness, consider the following example: We have an instance with $n \geq 2$ identical agents and $2n$ items. Let $\alpha \in (0,1]$. For every agent $i$ we have 
	\[ 
	v_i(g_j)= \left\{
	\begin{array}{ll}
           \frac{\alpha}{2} & j=1,2\\
	1 & 3 \leq j\leq 2n\\
	\end{array} 
	\right. 
	\]
Now consider $\mathcal{A} = (A_1, \ldots, A_n)=(\{g_1,g_2\}, \{g_3, g_4,\}, \{g_5,g_6\},\ldots,\{g_{2n-1},g_{2n}\})$. It is easy to see that this is an $\alpha$-\efx allocation which is also an $\frac{2\alpha}{2+\alpha}$-\pmms allocation.
\end{proof}

\begin{proposition}\label{prop:efo_to_pmms}
	For $n\ge 3$, 
	any $\alpha$-\efo allocation is also an $\frac{\alpha}{1+\alpha}$-\pmms allocation  for any $\alpha \in (0,1]$, and this is tight.
\end{proposition}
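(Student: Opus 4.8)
The plan is to fix a single pair of agents $i,j$ and to bound the pairwise maximin share $\bmu_i(2, A_i \cup A_j)$ from above in terms of $v_i(A_i)$. Write $S = A_i \cup A_j$. If $A_j = \emptyset$ the claim is immediate, since $\bmu_i(2, S) \le \tfrac12 v_i(A_i) \le v_i(A_i)$ and $\tfrac{\alpha}{1+\alpha}\le 1$, so I would assume $A_j \neq \emptyset$. The key observation is a sharp general upper bound on the two-bundle maximin share: for any nonempty set $S$,
\[\bmu_i(2, S) \le v_i(S) - \max_{h \in S} v_i(h)\,.\]
This holds because in any $2$-partition the bundle containing a most valuable item $h^* \in S$ has value at least $v_i(h^*)$, so the other bundle, and hence the minimum of the two, has value at most $v_i(S) - v_i(h^*)$.

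Next I would extract the right inequality from $\alpha$-\efo. Let $g^* \in \argmax_{h \in A_j} v_i(h)$. Since removing the most valuable item minimizes the residual value, the item $g$ guaranteed by the $\alpha$-\efo condition satisfies $v_i(A_j \mysetminus \{g^*\}) \le v_i(A_j \mysetminus \{g\})$, whence $v_i(A_i) \ge \alpha\big(v_i(A_j) - v_i(g^*)\big)$, i.e. $v_i(A_j) - v_i(g^*) \le \tfrac1\alpha v_i(A_i)$. Combining this with the maximin bound and $\max_{h \in S} v_i(h) \ge v_i(g^*)$ gives
\[\bmu_i(2, S) \le v_i(S) - v_i(g^*) = v_i(A_i) + \big(v_i(A_j) - v_i(g^*)\big) \le v_i(A_i) + \tfrac1\alpha v_i(A_i) = \tfrac{1+\alpha}{\alpha}\, v_i(A_i)\,,\]
which rearranges to $v_i(A_i) \ge \tfrac{\alpha}{1+\alpha}\bmu_i(2, S)$, as required. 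I would remark that this argument needs nothing about $n$ beyond $n \ge 2$; the restriction $n \ge 3$ only serves to separate the statement from the $n=2$ case, where \pmms coincides with \mms (cf. Proposition \ref{prop:efo_to_mms}). Note also that the essential difference from the $\alpha$-\efx case (Proposition \ref{prop:a_efx_to_pmms}) is precisely that \efo controls only $A_j$ with its top item removed, forcing the use of the sharper bound $\bmu_i(2,S)\le v_i(S)-\max_h v_i(h)$ in place of $2\bmu_i(2,S)\le v_i(S)$.

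For tightness I would take $n$ identical agents and items $g_1$ with $v(g_1) = \alpha$, together with a ``large'' item $b_k$ of value $V \gg 1$ and a ``small'' item $s_k$ of value $1$ for each $k \in \{2, \ldots, n\}$, and the allocation $A_1 = \{g_1\}$, $A_k = \{b_k, s_k\}$. Removing $b_k$ from $A_k$ leaves value $1$, so agent $1$'s \efo constraint against each $A_k$ is met with equality ($\alpha \ge \alpha \cdot 1$), while no other pair exhibits envy; hence $\mathcal{A}$ is exactly $\alpha$-\efo. For the pair $(1,2)$ the set $\{g_1, b_2, s_2\}$ has values $\{\alpha, V, 1\}$, whose optimal $2$-partition isolates $b_2$ and yields $\bmu_1(2, A_1 \cup A_2) = 1 + \alpha$; thus $v_1(A_1) = \alpha = \tfrac{\alpha}{1+\alpha}\bmu_1(2, A_1 \cup A_2)$, matching the bound.

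The main obstacle is conceptual rather than computational: spotting the sharp maximin estimate $\bmu_i(2, S) \le v_i(S) - \max_h v_i(h)$ and realizing that the correct item to cancel in the \efo condition is the globally most valuable item of $A_j$. Once these two ingredients are in place the derivation is a one-line chain of inequalities, and the remaining work — the empty-bundle edge case and verifying the tight instance — is routine bookkeeping.
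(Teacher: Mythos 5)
Your proof is correct, and its positive direction is a genuinely more streamlined route than the paper's. The paper fixes the witness $g$ from the $\alpha$-\efo condition and argues by cases: when $v_i(g) > v_i((A_i \cup A_j)\mysetminus\{g\})$ it observes that $(\{g\}, (A_i\cup A_j)\mysetminus\{g\})$ is a $2$-maximin share defining partition, so $\bmu_i(2, A_i\cup A_j) = v_i(A_i)+v_i(A_j)-v_i(g)$; otherwise it falls back on the averaging bound $2\bmu_i(2, A_i\cup A_j) \le v_i(A_i\cup A_j)$ together with the case hypothesis $v_i(g) \le v_i((A_i\cup A_j)\mysetminus\{g\})$. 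Your lemma $\bmu_i(2,S) \le v_i(S) - \max_{h\in S} v_i(h)$ is precisely the paper's Case 1 estimate, except you prove it holds unconditionally (the bundle avoiding the top item caps the minimum), which makes the case split unnecessary; combined with your (correct) observation that the \efo guarantee can always be invoked with $g^* \in \argmax_{h\in A_j} v_i(h)$, the bound $\bmu_i(2, A_i\cup A_j) \le \frac{1+\alpha}{\alpha}v_i(A_i)$ follows in one chain of inequalities. What your approach buys is brevity and a reusable, sharp structural fact about two-bundle maximin shares; the paper's version offers nothing extra here beyond making the tight configuration (Case 1) explicit. The tightness constructions are essentially the same idea --- agent $1$ holds a single item of value $\alpha$ while other bundles pair one huge item with one unit-value item --- with yours having the minor advantage of being fully specified via identical agents, where the paper instead postulates that agents $2,\dots,n$ are non-envious. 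Your remark that the positive direction only needs $n\ge 2$ is also accurate.
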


\begin{proof}
Let $\mathcal{A} = (A_1, \ldots, A_n)$ be an $\alpha$-\efo allocation and consider agents $i, j$. We focus on agent $i$ and we assume that $g$ is an item in $A_j$ so that $v_i(A_i) \geq \alpha(v_i(A_j)-v_i(g))$. We split the proof into two cases: \smallskip

\noindent\textit{Case 1:} If $v_i(g) > v_i((A_i \cup A_j)\mysetminus\{g\})$, then $(\{g\}, (A_i \cup A_j)\mysetminus\{g\})$ is a 2-maximin share defining partition for agent $i$. That is, $\bmu_i(2,A_i \cup A_j)=v_i(A_i)+v_i(A_j)-v_i(g)$. Therefore, $\bmu_i(2,A_i \cup A_j)\leq v_i(A_i)+\frac{1}{\alpha}v_i(A_i)$, or equivalently, $v_i(A_i) \geq \frac{\alpha}{1+\alpha}\bmu_i(2,A_i \cup A_j)$.\medskip

\noindent\textit{Case 2:} If $v_i(g) \leq v_i((A_i \cup A_j)\mysetminus\{g\})$, then 
	\begin{IEEEeqnarray*}{rCl}
	2\bmu_i(2,A_i \cup A_j) & \le & v_i(A_i\cup A_j) =  v_i(A_i)+v_i(A_j\mysetminus\{g\})+v_i(g) \\
	& \le & 2v_i(A_i)+2v_i(A_j\mysetminus\{g\}) \leq \left( 2+\frac{2}{\alpha}\right) v_i(A_i) \,,
\end{IEEEeqnarray*} 
or equivalently, $v_i(A_i) \geq \frac{\alpha}{1+\alpha}\bmu_i(2,A_i \cup A_j)$.\smallskip

Regarding tightness, consider the following example: We have an instance with $n \geq 3$ agents and $n+1$ items. Let $\alpha \in (0,1)$ and $V\gg1+\alpha$. We focus on agent $i$ and we have 
	\[ 
	v_1(g_j)= \left\{
	\begin{array}{ll}
           \alpha & j=1 \text{ or } 4 \leq j\leq n+1\\
	1 & j=2\\
	V&j=3 
	\end{array} 
	\right. 
	\]
Now consider $\mathcal{A} = (A_1, \ldots, A_n)=(\{g_1\}, \{g_2, g_3\}, \{g_5\},\ldots,\{g_{n+4}\})$ and assume that agents $2\leq i\leq n$ are not envious. It is easy to see that accoriding to agent 1, this is an $\alpha$-\efo allocation which is also  an $\frac{\alpha}{1+\alpha}$-\pmms allocation.
\end{proof}

\paragraph{Algorithmic Implications for PMMS.} The last two propositions also have further consequences. First of all, it is known that for additive valuations, \efo allocations can be computed efficiently by a simple round-robin algorithm \cite{LMMS04,CKMP016}. Hence Proposition \ref{prop:efo_to_pmms} yields a 1/2-\pmms allocation in polynomial time. Moreover, exact \efx allocations can be computed efficiently when agents have the same ordering on the values of the goods \cite{BM17,PR18}, i.e., when all agents have the same ordinal preferences (but possibly different cardinal values). This implies a $2/3$-approximation by Proposition \ref{prop:a_efx_to_pmms}. These facts are summarized below.

\begin{corollary}\label{col:rr}
	For $n\ge 3$,
	\begin{enumerate}[label=(\arabic*)]
		\item  the round-robin algorithm, where each agent picks in her turn  her favorite available item, produces a $\frac{1}{2}$-\pmms allocation.
		\item  we can compute in polynomial time a $\frac{2}{3}$-\pmms allocation when all agents agree on the ordering of the goods with respect to their value.
	\end{enumerate}	
\end{corollary}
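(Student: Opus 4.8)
The plan is to derive both parts directly from the two propositions just established, specialized to the exact case $\alpha = 1$, and combined with the already-known polynomial-time algorithms for producing exact \efo allocations and (order-restricted) exact \efx allocations. Neither part requires new analysis; each is a one-line composition of a cited algorithmic fact with one of the preceding propositions.

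For part (1), I would first invoke the standard fact that the round-robin procedure---where agents take turns, each picking her favorite remaining good---produces an exact \efo allocation under additive valuations \cite{LMMS04,CKMP016}. This procedure clearly runs in polynomial time, as it terminates after $m$ picks and each pick only requires locating a value-maximizing available good. Reading the output as a $1$-\efo allocation and applying Proposition \ref{prop:efo_to_pmms} with $\alpha = 1$ then gives that it is an $\frac{1}{1+1} = \frac{1}{2}$-\pmms allocation, as claimed.

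For part (2), I would instead appeal to the known result that, when all agents share a common ordering of the goods by value, an exact \efx allocation can be computed in polynomial time \cite{BM17,PR18}. Viewing this as a $1$-\efx allocation and applying Proposition \ref{prop:a_efx_to_pmms} with $\alpha = 1$ yields a $\frac{2 \cdot 1}{2 + 1} = \frac{2}{3}$-\pmms allocation, again in polynomial time.

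Since the argument is essentially bookkeeping, I do not anticipate any genuine obstacle. The only point warranting care is ensuring that the cited algorithmic guarantees are for \emph{exact} \efo and \efx, i.e.\ $\alpha = 1$: plugging any smaller $\alpha$ into the two propositions would degrade the constants below $\frac{1}{2}$ and $\frac{2}{3}$ respectively. As both round-robin and the order-restricted \efx algorithm deliver exact allocations, the two constants follow cleanly, and the constraint $n \ge 3$ in the corollary is compatible with the ranges of $n$ required by Propositions \ref{prop:efo_to_pmms} and \ref{prop:a_efx_to_pmms}.
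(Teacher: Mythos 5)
Your proposal is correct and matches the paper's own justification exactly: the paper likewise obtains part (1) by composing the round-robin \efo guarantee \cite{LMMS04,CKMP016} with Proposition \ref{prop:efo_to_pmms} at $\alpha=1$, and part (2) by composing the common-ordering \efx algorithms \cite{BM17,PR18} with Proposition \ref{prop:a_efx_to_pmms} at $\alpha=1$. Your added care about requiring the \emph{exact} ($\alpha=1$) versions of the cited algorithmic results, and about the compatibility of the ranges of $n$, is sound but introduces nothing beyond the paper's argument.
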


It is an interesting open problem to compute a better than $1/2$-\pmms allocation for additive valuations. So far, the existence of $0.618$-\pmms allocations has been established but without an efficient algorithm \cite{CKMP016}.

\section{PMMS and MMS Allocations}
\label{sec:(P)MMS}

We now explore analogous questions with Section \ref{sec:EFX-EF1}, but starting now with (approximate) \mms or \pmms allocations. 
We begin with the guarantees that \pmms allocations imply for \mms and vice versa. In order to proceed, we will make use of the following observation, that \pmms implies \efx.

\begin{proposition}[Caragiannis et al.~\cite{CKMP016}]\label{prop:pmms_to_efx}
	For $n\ge 2$, any \pmms allocation is also an \efx allocation.
\end{proposition}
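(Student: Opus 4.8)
The plan is to show that any \pmms allocation $\mathcal{A} = (A_1, \ldots, A_n)$ satisfies the \efx condition directly. Fix any pair of agents $i, j \in N$ with $A_j \neq \emptyset$, and fix any item $g \in A_j$ with $v_i(g) > 0$. I need to show that $v_i(A_i) \geq v_i(A_j \setminus \{g\})$. The key idea is to use the \pmms guarantee by exhibiting a specific partition of $A_i \cup A_j$ into two bundles and comparing against the maximin value over all such partitions.

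First I would unpack what the \pmms condition gives for the pair $(i,j)$: namely $v_i(A_i) \geq \max_{(B_1,B_2) \in \Pi_2(A_i \cup A_j)} \min\{v_i(B_1), v_i(B_2)\}$. The natural candidate partition to plug in is the one that isolates the single item $g$, that is, $(B_1, B_2) = (A_i \cup (A_j \setminus \{g\}),\ \{g\})$. Under this partition, $\min\{v_i(B_1), v_i(B_2)\} = \min\{v_i(A_i) + v_i(A_j \setminus \{g\}),\ v_i(g)\}$. Since this partition is one of the candidates in the maximum, the \pmms inequality yields $v_i(A_i) \geq \min\{v_i(A_i) + v_i(A_j \setminus \{g\}),\ v_i(g)\}$.

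Now I would analyze the two cases of the minimum. If the minimum equals $v_i(A_i) + v_i(A_j \setminus \{g\})$, then $v_i(A_i) \geq v_i(A_i) + v_i(A_j \setminus \{g\})$, forcing $v_i(A_j \setminus \{g\}) \leq 0$, hence $v_i(A_i) \geq 0 \geq v_i(A_j \setminus \{g\})$ by monotonicity, and the \efx condition holds trivially. Otherwise the minimum equals $v_i(g)$, so $v_i(A_i) \geq v_i(g)$; but this alone does not immediately give the \efx bound. To handle this case I would instead use a sharper choice of partition: take a maximin-optimal $2$-partition of the set $A_i \cup (A_j \setminus \{g\})$ (not including $g$), or argue by moving $g$ appropriately. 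The cleanest route is to consider the partition obtained from an optimal $2$-split of $A_i \cup A_j$ and show that removing the least-valued item of $A_j$ cannot leave $A_j \setminus \{g\}$ worth more than $A_i$.

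The main obstacle I anticipate is case two: merely knowing $v_i(A_i) \geq v_i(g)$ is weaker than what \efx demands. To close this I would pick $g$ to be a \emph{minimum-value} item of $A_j$ (which is legitimate since the \efx condition, once verified for the min-value item, extends to all items), and then use the \pmms guarantee applied to the partition $(A_i \cup (A_j \setminus \{g\}), \{g\})$ together with the fact that any single item's value is at most the value of the optimal larger half. Concretely, if $v_i(A_j \setminus \{g\}) > v_i(A_i)$, then moving $g$ across to balance the two bundles would produce a $2$-partition of $A_i \cup A_j$ whose minimum strictly exceeds $v_i(A_i)$ — because $g$ was a minimum-value item, swapping it cannot overshoot past $A_i$ — contradicting the \pmms inequality. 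Formalizing this balancing argument, i.e., verifying that the contradiction partition genuinely achieves a minimum above $v_i(A_i)$, is the delicate step and is where I would spend the most care.
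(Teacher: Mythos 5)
The paper does not actually prove this proposition---it is imported from Caragiannis et al.~\cite{CKMP016}---so your attempt is judged against the standard argument, and your final ``balancing'' step is exactly that argument. Written cleanly: fix $i,j$ and $g \in A_j$ with $v_i(g) > 0$, and suppose toward a contradiction that $v_i(A_j \setminus \{g\}) > v_i(A_i)$. Then in the partition $(A_i \cup \{g\},\, A_j \setminus \{g\})$ of $A_i \cup A_j$, both bundles are worth strictly more than $v_i(A_i)$: the second by the contradiction hypothesis, the first because $v_i(g) > 0$. Hence $\bmu_i(2, A_i \cup A_j) > v_i(A_i)$, contradicting the \pmms condition. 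That is the whole proof; the ``delicate step'' you defer is this one-line verification.

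That said, two parts of your write-up need repair. First, the validity of the contradiction partition has nothing to do with $g$ being a minimum-value item; the crux is solely that $v_i(g) > 0$, which is exactly the positivity requirement in the paper's Definition~\ref{def:EF1-EFX}. Your reduction ``once verified for the min-value item, it extends to all items'' is false as literally stated: if the minimum-value item $c \in A_j$ has $v_i(c) = 0$, then \efx exempts it, and $v_i(A_i) \geq v_i(A_j \setminus \{c\})$ can genuinely fail in a \pmms allocation. For example, let $v_i(A_i) = 1$ and $A_j = \{b,c\}$ with $v_i(b) = 2$, $v_i(c) = 0$: here $\bmu_i(2, A_i \cup A_j) = 1 = v_i(A_i)$, so the pair is \pmms-consistent and \efx, yet $v_i(A_j \setminus \{c\}) = 2 > v_i(A_i)$. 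Either restrict to the minimum-value item among those of \emph{positive} value, or---simpler---drop the reduction entirely, since the argument above works verbatim for every $g$ with $v_i(g) > 0$. Second, your opening move of plugging the partition $(A_i \cup (A_j \setminus \{g\}), \{g\})$ into the \pmms guarantee is a dead end: testing any specific partition only yields a bound of the form $v_i(A_i) \geq \min\{\cdot, \cdot\}$, which is weaker than the \pmms inequality itself. The proof must run in the opposite direction---exhibit a partition whose minimum beats $v_i(A_i)$ so as to contradict \pmms---which is precisely what your final step does.
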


For two agents, it is clear by their definition that the notions of \mms and \pmms are identical. For more agents, by Propositions \ref{prop:efx_to_mms_n=2,3}, \ref{prop:efx_to_mms} and \ref{prop:pmms_to_efx}, we have the following corollary.

\begin{corollary} \label{cor:pmms_to_mms}
	For $n = 3$, a \pmms allocation is also a $\frac{2}{3}$-\mms allocation. Moreover, for $n\geq 4$, a \pmms allocation is also a $\frac{4}{7}$-\mms allocation.
\end{corollary}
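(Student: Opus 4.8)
The plan is essentially to compose the three already-established implications transitively, so the argument is short. First I would invoke Proposition \ref{prop:pmms_to_efx}, which guarantees that for $n \ge 2$ any \pmms allocation is also an \efx allocation. This is the single structural input that links the \pmms hypothesis to the \mms conclusion: once we know the allocation is \efx, the maximin-share guarantees proved earlier for \efx allocations apply verbatim, with no additional work on the \pmms side.

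For the case $n = 3$, I would then apply Proposition \ref{prop:efx_to_mms_n=2,3}, which asserts that every \efx allocation is a $\frac{2}{3}$-\mms allocation for $n \in \{2,3\}$; chaining this with the previous step yields that the given \pmms allocation is a $\frac{2}{3}$-\mms allocation. For $n \ge 4$, I would instead apply Proposition \ref{prop:efx_to_mms}, which upgrades the same \efx input to a $\frac{4}{7}$-\mms guarantee in that range, and again compose the two implications to conclude that the \pmms allocation is $\frac{4}{7}$-\mms.

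I do not expect any genuine obstacle, since all the substance resides inside the cited propositions and the corollary is a pure transitivity statement. The only point that needs care is bookkeeping on the range of $n$: the two \efx-to-\mms propositions partition the relevant values of $n$ as $\{3\}$ and $\{4, 5, \ldots\}$, matching exactly the two cases in the statement, while $n = 2$ is deliberately omitted because there \pmms and \mms coincide by definition and the guarantee is trivially $1$.
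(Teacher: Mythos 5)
Your proposal is correct and is exactly the paper's own argument: the corollary is stated as an immediate consequence of Propositions \ref{prop:pmms_to_efx}, \ref{prop:efx_to_mms_n=2,3}, and \ref{prop:efx_to_mms}, composed transitively just as you describe. Your remark about $n=2$ (where \pmms and \mms coincide by definition) also matches the paper's brief note preceding the corollary.
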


Moreover, the guarantees of Corollary \ref{cor:pmms_to_mms} are tight for a small number of agents and almost tight for bigger instances.

\begin{proposition}\label{prop:pmms_to_mms}
	For $n\ge 3$, a \pmms allocation is not necessarily an $\alpha$-\mms allocation for $\alpha > \frac{2}{3}$ and, as $n$ grows large, for $\alpha \ge 0.5914$.
\end{proposition}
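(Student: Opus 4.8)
The plan is to establish two separate upper bounds: a tight bound of $\frac{2}{3}$ for $n=3$, and an asymptotic bound approaching $0.5914$ as $n$ grows. For the $n=3$ case, I would construct an explicit instance with identical agents where a \pmms allocation fails to beat $\frac{2}{3}$-\mms. Since the notions of \mms and \pmms coincide for two agents but diverge for three, the natural approach is to find a partition into three bundles that is pairwise-maximin-fair (each agent, comparing her bundle against any other single bundle, cannot do better than a $2$-split) yet leaves some agent at exactly $\frac{2}{3}$ of her true $3$-maximin share. The tightness example from Proposition \ref{prop:efx_to_mms_n=2,3} (items of values $2,2,1,1$ with allocation $(\{a,b\},\{c,d\})$ generalized to three agents) is a promising template, since by Proposition \ref{prop:pmms_to_efx} every \pmms allocation is \efx, so the \efx tightness instances are candidates — I would need to verify the stronger \pmms property holds on that specific allocation.

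For the asymptotic bound of $0.5914$, the strategy is to reuse and refine the lower-bound construction that drives Proposition \ref{prop:efx_to_mms}. Recall that the negative direction there produced an \efx allocation achieving only $\frac{8}{13}$-\mms for $n=4$ using items of values $\frac{1}{8}, \frac{1}{2}, 1$. To get a \pmms (not merely \efx) tightness instance, I would design a family of instances with identical agents, parametrized by $n$, where the allocation gives each agent a bundle whose value is a fixed fraction of $\bmu_i$, while ensuring that every pairwise comparison $(B_1,B_2)\in\Pi_2(A_i\cup A_j)$ cannot improve on the agent's current bundle by more than the \pmms factor. The figure in the excerpt already advertises the value $0.5914$, suggesting the optimal instance is obtained by solving a small optimization over item-value profiles (likely a handful of distinct item values, analogous to the three-value profile $\frac{1}{8},\frac{1}{2},1$), and then taking $n\to\infty$ to push the ratio to its supremum.

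The main obstacle will be verifying the \pmms condition rather than just \efx. The \efx condition only constrains an agent against single other bundles after removing one item, whereas \pmms requires that, for the \emph{pooled} pair $A_i \cup A_j$, the agent's bundle $A_i$ is at least the $\alpha$-fraction of the best balanced two-way split of the pool. Because the agents are identical and the bundles heterogeneous in size, I would need to check that no rearrangement of the two pooled bundles into halves gives either agent a value exceeding $\frac{1}{\alpha}v_i(A_i)$; this is a combinatorial bin-balancing check that is more delicate than the per-item \efx check. The cleanest route is to make all bundles have a clear structure (e.g. each small bundle being one ``large'' item plus filler, or two medium items), so that the optimal $2$-split of any pool is transparent and its value can be computed directly.

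Finally, I would assemble the pieces: present the $n=3$ instance and confirm it is exactly $\frac{2}{3}$-tight, then present the scalable family and compute its \mms ratio, taking the limit to obtain the $0.5914$ threshold. As with the other tightness results in this section, I would close by noting that adding dummy agents and items (or additional copies of a bundle and its agent, as in Proposition \ref{prop:efx_to_mms}) lets the construction be realized for all sufficiently large $n$, so the bound is not an artifact of a single value of $n$.
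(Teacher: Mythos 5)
Your plan fails at exactly the step you defer: verifying that your candidate instances are \pmms. They are not, and this is not a technicality but the crux of the proposition. Take your proposed $n=3$ instance, the tightness example of Proposition \ref{prop:efx_to_mms_n=2,3} with item values $2,2,1,1$ (plus an item of value $3$ for the added copy of agent $2$) and allocation $(\{a,b\},\{c,d\},\{e\})$. Pooling the first two bundles gives items of values $\{2,2,1,1\}$, which split into $\{2,1\}$ and $\{2,1\}$, so $\bmu_2(2,A_1\cup A_2)=3$ while $v_2(A_2)=2$: this allocation is only a $\frac{2}{3}$-\pmms allocation, not \pmms. The same happens to the $\frac{8}{13}$ instance of Proposition \ref{prop:efx_to_mms} that you want to refine for the asymptotic bound: pooling $A_1=\{g_1,\dots,g_5\}$ with $A_3=\{g_9,g_{10}\}$ admits the split $\{g_9,g_5\}$, $\{g_{10},g_1,g_2,g_3,g_4\}$, both parts worth $\frac{3}{2}$, so the pairwise maximin share is $\frac{3}{2}>1=v_1(A_1)$. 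This failure is systematic: \efx-tight examples achieve their badness precisely by letting the poor agent profit from rebalancing a pooled pair, which is exactly what \pmms forbids. Proposition \ref{prop:pmms_to_efx} only tells you that a \pmms-tight example must be \efx-tight; the converse search direction you adopt is the wrong one, and since you supply no replacement construction, neither bound is established.

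The missing idea is a construction in which the disadvantaged agent holds a \emph{single} item of value $1$ and every other bundle $A_j$ has the property that no proper subset of $A_j$ is worth more than $1$ to her; then any $2$-split of $A_1\cup A_j$ has a part disjoint from $\{g_1\}$ worth at most $1$, so $\bmu_1(2,A_1\cup A_j)=1=v_1(A_1)$ and \pmms holds with equality, while her global maximin share can be much larger. The paper does exactly this (making all other agents value essentially only their own bundles, so only agent $1$'s constraints matter). For $n=3$: values $\frac12,\frac12,\frac12,1,1,1$ and allocation $(\{g_4\},\{g_1,g_2,g_3\},\{g_5,g_6\})$, giving $\bmu_1=\frac32$ and ratio exactly $\frac23$. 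For large $n$: bundles of $k$ items of value $\frac{1}{k-1}$ for $k\in\{3,7,43,1807\}$ (Sylvester's sequence), plus bundles of two items of value $1$; each such bundle has total value $\frac{k}{k-1}>1$, yet every proper subset is worth at most $1$, so \pmms is preserved, while a maximin partition for agent $1$ puts one item of each type in every bundle, giving $\bmu_1 = 1+\frac12+\frac16+\frac1{42}+\frac1{1806}\approx 1.6910$ and ratio $\approx 0.5913<0.5914$. Your proposed ``small optimization over item-value profiles'' would have to rediscover this structure; nothing in refining the $\bigl(\frac18,\frac12,1\bigr)$ profile leads there, so as written the proposal has no route to either the $\frac23$ bound or the $0.5914$ bound.
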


\begin{proof}
	Suppose that we have  the following instance with 3 agents and 6 items. The items have the following values for agent 1: 
	\[ 
	v_1(g_i)= \left\{
	\begin{array}{ll}
	1/2 & 1\le i \le 3 \\
	1 & 4 \leq i\leq 6 
	\end{array} 
	\right. 
	\]
	Clearly, $\bmu_1 = \frac{3}{2}$. Consider the allocation $\mathcal{A} = (\{g_4\}, \allowbreak \{g_1, g_2, g_3\}, \allowbreak \{g_5, g_6\})$. Assuming that agents $2$ and $3$ have large value for their bundles and negligible value for everything else, it is easy to check that $\mathcal{A}$ is a \pmms allocation. However, $v_1(A_1) = 1 = \frac{2}{3} \bmu_1$. By adding an arbitrary number of copies of agent $3$ and her bundle, this instance can be generalized to any number of agents.
	
	Next, we show a stronger bound as $n$ grows large. The construction below achieves the desired bound for $n \ge 3\cdot 7\cdot 43 \cdot 1806 = 1,631,721$. However, there is a smooth transition from $2/3$ to that; e.g., already for $n\ge 21$ we get a bound of $3/5$, which worsens as $n$ increases.
	We are going to consider the suggested allocation from the viewpoint of 
	agent 1, while assuming that agents $2$ through $n$ have large value for their bundles and negligible value for everything else.
	Since there is a large number of items, we are not going to define $v_1(\cdot)$ explicitly, but implicitly through the different types of bundles
	the agents get. So consider the allocation $\mathcal{A}$ where:
	\begin{itemize}[label={--},itemsep=0.5ex]
		\item  agent $1$ receives 1 item of value $1$,
		\item  $\left\lfloor \frac{n}{3} \right\rfloor$ agents receive 3 items of value $\frac{1}{2}$ each,
		\item  $\left\lfloor \frac{n}{7} \right\rfloor$ agents receive 7 items of value $\frac{1}{6}$ each,
		\item  $\left\lfloor \frac{n}{43} \right\rfloor$ agents receive 43 items of value $\frac{1}{42}$ each,
		\item  $\left\lfloor \frac{n}{1807} \right\rfloor$ agents receive 1807 items of value $\frac{1}{1806}$ each,
		\item  the remaining (at least $\frac{n-1}{2}$) agents receive 2 items of value $1$ each.
	\end{itemize}
	It is easy to see that $\mathcal{A}$ is \efx. What may not be obvious is that the number of agents receiving 2 items is at least $\frac{n-1}{2}$. However, it is a matter of simple calculations to check that $\left( \left\lfloor \frac{n}{3} \right\rfloor + \left\lfloor \frac{n}{7} \right\rfloor + \left\lfloor \frac{n}{43} \right\rfloor + \left\lfloor \frac{n}{1807} \right\rfloor \right) +1 \le \frac{n+1}{2}$ for $n \ge 3$. We show now how to get the bound for  $n = 1,631,721$ (in which case the agents receiving 2 items are exactly $\frac{n-1}{2}$). Then, this instance can be generalized to any number of agents by just adding more agents who receive 2 items of value $1$ each. 
	
	Calculating now $\bmu_1$ is straightforward: in  an $n$-maximin share defining partition for agent $1$, each agent would receive exactly one item of each type. That is, $\bmu_1 = 1 + \frac{1}{2} + \frac{1}{6} + \frac{1}{42} + \frac{1}{1806}$, and $v_1(A_1) = 1 < 0.5914\cdot \bmu_1$. 
\end{proof}

We continue with the worst-case guarantee we can get for \mms by an $\alpha$-\pmms allocation, with $\alpha< 1$. Notice that now the guarantee degrades with $n$.

\begin{proposition}\label{prop:a_pmms_to_mms}
	For $n\geq 3$ and $\alpha \in (0,1)$, any $\alpha$-\pmms allocation is also an $\frac{\alpha}{2(n-1)-\alpha(n-2)}$-\mms allocation but not necessarily a $\beta$-\mms allocation, for any $\beta> \frac{\alpha}{n-1-\alpha(n-2)}$. 
\end{proposition}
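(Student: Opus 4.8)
The plan is to treat the two directions separately. For the lower bound I fix an agent $i$, write $\bmu_i$ for $\bmu_i(n,M)$, and for each $j\neq i$ with $A_j\neq\emptyset$ let $g_j\in A_j$ be a good of maximum value to $i$ and set $r_j=v_i(A_j\mysetminus\{g_j\})$. The crux is the claim that $r_j\le\frac{2-\alpha}{\alpha}\,v_i(A_i)$ for every $j\neq i$. Granting this, I would finish with a pigeonhole argument on an $n$-maximin share defining partition $\mathcal{T}=(T_1,\dots,T_n)$ for $i$: the goods $g_j$ ($j\neq i$) are at most $n-1$ distinct items, so some bundle $T^{\ast}$ contains none of them, whence $T^{\ast}\subseteq A_i\cup\bigcup_{j\neq i}(A_j\mysetminus\{g_j\})$. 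Since $\bmu_i=\min_k v_i(T_k)\le v_i(T^{\ast})$, the claim yields
\[ \bmu_i\le v_i(A_i)+\sum_{j\neq i}r_j\le\Big(1+(n-1)\tfrac{2-\alpha}{\alpha}\Big)v_i(A_i)=\tfrac{2(n-1)-\alpha(n-2)}{\alpha}\,v_i(A_i), \]
which rearranges to the desired $v_i(A_i)\ge\frac{\alpha}{2(n-1)-\alpha(n-2)}\bmu_i$. Note this route uses no reduction; the monotonicity of Lemma~\ref{lem:monotonicity} is replaced by the simpler observation that $\bmu_i$ is bounded above by the value of \emph{any} bundle of $\mathcal{T}$.

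The main obstacle is the claim, which I would prove by an exchange argument rather than a blunt LPT estimate (the latter only gives $\frac{2}{\alpha}$, which is too weak for the stated denominator). Take a $2$-partition $(B_1,B_2)$ of $A_i\cup A_j$ attaining $\bmu_i(2,A_i\cup A_j)$, with $v_i(B_1)\le v_i(B_2)$ and discrepancy $d=v_i(B_2)-v_i(B_1)$. Optimality of $(B_1,B_2)$ forces every good in $B_2$ to have value at least $d$ (otherwise moving it to $B_1$ strictly raises the minimum). If $d\le v_i(g_j)$, then $\bmu_i(2,A_i\cup A_j)=v_i(B_1)\ge\frac12\big(v_i(A_i\cup A_j)-v_i(g_j)\big)=\frac12\big(v_i(A_i)+r_j\big)$, and the \pmms inequality $v_i(A_i)\ge\alpha\,\bmu_i(2,A_i\cup A_j)$ gives $r_j\le\frac{2-\alpha}{\alpha}v_i(A_i)$. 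If instead $d>v_i(g_j)$, then every good of $A_j$ has value at most $v_i(g_j)<d$, so $B_2$ contains no good of $A_j$, i.e.\ $B_2\subseteq A_i$; hence $\frac12 v_i(A_i\cup A_j)\le v_i(B_2)\le v_i(A_i)$, giving $v_i(A_j)\le v_i(A_i)$ and in particular $r_j\le v_i(A_i)\le\frac{2-\alpha}{\alpha}v_i(A_i)$. The delicate part is precisely this second case, where the most valuable good of the union sits inside $A_i$; the exchange argument is what rules out a bad configuration there.

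For the upper bound I would exhibit, for each $\alpha\in(0,1)$, an $\alpha$-\pmms allocation that is no better than $\frac{\alpha}{n-1-\alpha(n-2)}$-\mms. Take $n$ identical agents, $n-1$ goods of huge value $V$, and additional ``small'' goods arranged so that $v_1(A_1)=x$ while each $A_j$ ($j\ge 2$) holds exactly one huge good together with small goods of total value $\frac{1-\alpha}{\alpha}x$. The $n$-maximin share partition for agent $1$ gives each huge good its own bundle and collects all small goods into the remaining bundle, so $\bmu_1$ equals the total small value $x+(n-1)\frac{1-\alpha}{\alpha}x=\frac{(n-1)-\alpha(n-2)}{\alpha}x$ (choosing $V$ large enough that the huge-good bundles dominate). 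Isolating the huge good shows $\bmu_1(2,A_1\cup A_j)=\frac{x}{\alpha}$, so each pairwise constraint for agent $1$ reads $v_1(A_1)=x\ge\alpha\cdot\frac{x}{\alpha}$ and is tight, while agents $2,\dots,n$ are satisfied trivially since each holds a huge good; thus the allocation is $\alpha$-\pmms. Finally $v_1(A_1)=x=\frac{\alpha}{(n-1)-\alpha(n-2)}\bmu_1$, which shows it cannot be $\beta$-\mms for any $\beta>\frac{\alpha}{n-1-\alpha(n-2)}$, and the instance extends to arbitrarily many agents by appending further agents each holding a huge good.
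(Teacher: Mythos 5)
Your proof is correct, and on the positive direction it takes a partly different route from the paper's. The paper obtains the factor $\frac{\alpha}{2(n-1)-\alpha(n-2)}$ as a one-line corollary: it first shows (Proposition~\ref{prop:a_pmms_to_efo}) that any $\alpha$-\pmms allocation is $\frac{\alpha}{2-\alpha}$-\efo, and then applies Proposition~\ref{prop:efo_to_mms}, which turns any $\alpha'$-\efo allocation into an $\frac{\alpha'}{n-1+\alpha'}$-\mms allocation; plugging in $\alpha'=\frac{\alpha}{2-\alpha}$ gives exactly the stated bound. Your argument inlines this chain: your key claim $r_j\le\frac{2-\alpha}{\alpha}v_i(A_i)$ is precisely the $\frac{\alpha}{2-\alpha}$-\efo guarantee (with the most valuable item removed), and your pigeonhole step on the maximin-defining partition is the same mechanism as the paper's proof of Proposition~\ref{prop:efo_to_mms}. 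What is genuinely different is how you prove the claim: the paper's proof of Proposition~\ref{prop:a_pmms_to_efo} sorts the items of $A_j$ and greedily grows $A_i$ until its value passes $\bmu_i(2,A_i\cup A_j)$, deriving a contradiction, whereas you run an exchange argument on an optimal $2$-partition $(B_1,B_2)$ and its discrepancy $d$, casing on whether $d\le v_i(g_j)$; this is shorter and cleaner. One small repair is needed there: optimality does \emph{not} force every good of $B_2$ to have value at least $d$, since goods of value $0$ can sit in $B_2$ harmlessly. Either assume, as the paper does elsewhere, that all items have positive value for agent $i$ (zero-value items are irrelevant to both $i$'s \pmms constraints and $i$'s maximin share), or weaken the conclusion of your second case to ``every good of $A_j$ lying in $B_2$ has value $0$,'' which still gives $v_i(B_2)\le v_i(A_i)$ and the rest goes through unchanged. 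Your negative example is essentially the paper's construction (take $x=\alpha$: agent $1$ holds small goods worth $\alpha$, each other agent holds one huge good plus small goods worth $1-\alpha$); the only difference is that you use identical agents, which obliges you to also verify the pairs among agents $2,\dots,n$, while the paper sidesteps this by giving those agents near-trivial valuations.
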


\begin{proof}
	The positive result is a direct corollary of Propositions \ref{prop:a_pmms_to_efo} and \ref{prop:efo_to_mms}.
	
	For the upper bound, suppose that we have $n \ge 3$ agents and $2n -1$ items. We focus on agent $1$ and her values 
	\[ 
	v_1(g_j)= \left\{
	\begin{array}{lll}
	\alpha & j=1 & \\
	1-\alpha & j=2k, & 1 \leq k\leq n-1\\
	V & j=2k+1, & 1 \leq k\leq n-1 \\
	\end{array} 
	\right. 
	\]
	where $\alpha \in (0,1)$ and $V\gg 1$. 
	Now consider $\mathcal{A} = (A_1, \ldots, A_4)=(\{g_1\}, \{g_2,g_3\}, \allowbreak \{g_4,g_5\}, \allowbreak \ldots, \allowbreak \{g_{2n-2}, g_{2n-1},\})$. That is, every agent $i>1$ gets one item of value $1-\alpha$ and one item of value $V$.  It is easy to see that this is an $\alpha$-\pmms allocation, given that all agents, other than agent $1$, have large value for their bundles and negligible value for everything else. However, $\bmu_{1}(n, M) = \alpha + (n-1)(1-\alpha) = n-1-\alpha(n-2)$, and therefore $\mathcal{A}$ ia only an $\frac{\alpha}{n-1-\alpha(n-2)}$-\mms allocation.
\end{proof}

The next result exhibits a sharp contrast between \pmms and \mms. Although \pmms allocations (exact or approximate) imply some  \mms guarantee, even exact \mms allocations do not imply any approximation with respect to \pmms.

\begin{proposition}\label{prop:a_mms_to_pmms}
	For $n\ge 3$, an $\alpha$-\mms allocation is not necessarily a $\beta$-\pmms allocation for any  $\alpha, \beta \in (0,1]$. 
\end{proposition}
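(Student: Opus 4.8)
The plan is to produce, for each $n \ge 3$, a fixed instance and an allocation that is an \emph{exact} \mms allocation (hence $\alpha$-\mms for \emph{every} $\alpha \in (0,1]$) whose \pmms guarantee is as bad as we like. The driving observation is that two items of equal, very large value behave oppositely under an $n$-split and a $2$-split: in any good $n$-partition they are forced into two different bundles, so they do nothing to raise the minimum bundle and keep $\bmu_1$ tiny, but once both are handed to a single agent, that agent's bundle together with agent~$1$'s splits into two equally valuable halves, making the pairwise maximin share of agent~$1$ huge. Since \mms and \pmms coincide for two agents, a third agent is precisely what lets one player hoard both large items while agent~$1$'s required share collapses; this is why the statement is restricted to $n \ge 3$.

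Concretely, I would fix $V \gg 1$ and items $g_1, \dots, g_{n+1}$. Agent~$1$'s values are $v_1(g_1) = v_1(g_2) = V$, $v_1(g_3) = 1$, and $v_1(g_j) = 0$ for $j \ge 4$. Every other agent $i$ values the item(s) in its own bundle at some large $W>0$ and all remaining items at $0$. The allocation to analyze is
\[ \mathcal{A} = \big(\{g_3\},\ \{g_1, g_2\},\ \{g_4\},\ \{g_5\},\ \dots,\ \{g_{n+1}\}\big), \]
so that agent~$2$ receives both large items.

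The first step is to check that $\mathcal{A}$ is an exact \mms allocation. For every agent $i \ge 2$ this is immediate: it values at most two items positively, so with $n \ge 3$ bundles some bundle of any $n$-partition is worthless to it and $\bmu_i = 0 \le v_i(A_i)$. For agent~$1$, which values only three items, the same pigeonhole argument gives $\bmu_1 = 0$ when $n \ge 4$, while for $n = 3$ a short direct check shows the optimal partition is $(\{g_1\}, \{g_2\}, \{g_3\})$ with $\bmu_1 = 1$; in all cases $v_1(A_1) = 1 \ge \bmu_1$. The second step is the \pmms failure for the pair $(1,2)$: here $A_1 \cup A_2 = \{g_1, g_2, g_3\}$, and the split $(\{g_1\}, \{g_2, g_3\})$ certifies $\bmu_1(2, A_1 \cup A_2) = V$, whereas $v_1(A_1) = 1$. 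Thus agent~$1$'s \pmms ratio against agent~$2$ equals $1/V$, and $\mathcal{A}$ is not $\beta$-\pmms once $V > 1/\beta$.

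I do not anticipate a genuine obstacle: the construction is elementary once the ``lumpy for $n$, divisible for $2$'' effect is identified, and the other agents are rendered harmless by letting them value only their own items. The only point asking for a little care is confirming that $\bmu_1$ stays at most $1$ for \emph{every} $n \ge 3$, in particular the $n=3$ case, where one verifies that no $3$-partition can force a minimum bundle above $1$. Finally, because $\mathcal{A}$ is an exact \mms allocation it is $\alpha$-\mms for all $\alpha \in (0,1]$ simultaneously, so for any target $\beta \in (0,1]$ choosing $V > 1/\beta$ yields the required counterexample for every admissible $\alpha$.
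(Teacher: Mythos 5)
Your proof is correct and follows essentially the same approach as the paper: both construct an exact \mms allocation (hence $\alpha$-\mms for all $\alpha$) in which agent~$2$ hoards the items agent~$1$ values most, so that the pair $(1,2)$ violates \pmms by an arbitrarily large factor --- the paper shrinks agent~$1$'s share to $\epsilon$ while keeping $\bmu_1(2, A_1 \cup A_2) \ge 1$, whereas you keep agent~$1$'s share at $1$ and blow up the pairwise share to $V$, which is the same trick up to rescaling. The exploited structural fact is also identical: with $n \ge 3$ bundles, the items valuable to agent~$1$ cannot fill an $n$-partition (keeping $\bmu_1(n,M)$ tiny), yet they split evenly into two bundles.
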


\begin{proof}
	It suffices to prove the proposition for $\alpha = 1$. Consider an instance with $n \geq 3$  agents and $2n-2$ items. We focus on agent $1$ and her values 
	\[ 
	v_1(g_j)= \left\{
	\begin{array}{ll}
	\epsilon & j=1\\
	1 & 2 \leq j\leq n\\
	0 & n+1 \leq j\leq 2n-2 
	\end{array} 
	\right. 
	\]
	where $\epsilon>0$ is arbitrarily small. That is, agent $1$ views $n-1$ items as ``large'', $1$ item as ``very small'', and everything else as ``worthless''. As a consequence $\bmu_{1}(n, M)=\epsilon$. Now consider the allocation $\mathcal{A}=(A_1, \ldots, A_n)=(\{g_1\}, \allowbreak \{g_{2},\ldots,g_{n}\}, \allowbreak \{g_{n+1}\}, \ldots, \{g_{2n-2}\})$ and 
	assume any agent $i\neq 1$ has large value for her bundle and negligible value for everything else. 
	It is easy to see that $\mathcal{A}$  is an \mms allocation. However $\bmu_{1}(2, A_1\cup A_2)\ge \left\lfloor\frac{n-1}{2}\right\rfloor \ge 1$ and thus  this is only a 
	$\gamma$-\pmms allocation for $\gamma\le\epsilon$.
\end{proof}

We conclude this section by discussing the guarantees of \mms and \pmms with respect to  \efo and \efx. Even for \efo, we see that \mms and \pmms differ significantly in what they can achieve.

\begin{proposition}\label{prop:a_pmms_to_efo}
	For $n\ge 2$, an $\alpha$-\pmms  allocation is also an $\frac{\alpha}{2-\alpha}$-\efo allocation for any  $\alpha \in (0,1)$, and this is tight.
\end{proposition}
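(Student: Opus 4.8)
The plan is to verify the $\frac{\alpha}{2-\alpha}$-\efo condition separately for each ordered pair of agents $i,j$ with $A_j\neq\emptyset$. Fix such a pair and focus on agent $i$. As the item to remove from $A_j$ I would pick $g\in\argmax_{h\in A_j}v_i(h)$, the most valuable item of $A_j$ for $i$, and write $a=v_i(A_i)$, $c=v_i(g)$, and $d=v_i(A_j\mysetminus\{g\})$, so that $v_i(A_i\cup A_j)=a+c+d$. The target inequality $a\ge\frac{\alpha}{2-\alpha}d$ is, after clearing denominators, equivalent to $a\ge\frac{\alpha}{2}(a+d)$. Since the $\alpha$-\pmms hypothesis gives $a\ge\alpha\,\bmu_i(2,A_i\cup A_j)$, it therefore suffices to establish the purely combinatorial bound
\[
\bmu_i(2,A_i\cup A_j)\ \ge\ \frac{a+d}{2}\,.
\]
The case $a\ge d$ is immediate because $\frac{\alpha}{2-\alpha}\le 1$, so I would assume $a<d$ from here on (this in particular covers $|A_j|\le 1$, where $d=0$).

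The heart of the argument, and the step I expect to be the main obstacle, is the maximin bound above. The naive estimate $\bmu_i(2,S)\ge\frac12\bigl(v_i(S)-m\bigr)$, with $m$ the globally most valuable item of $S=A_i\cup A_j$, is too weak: if a large item sits in $A_i$ then $m$ can far exceed $c$, and the bound degrades to roughly $\frac{\alpha}{2}d$, short of the required $\frac{\alpha}{2-\alpha}d$. The fix is to show that an optimal $2$-partition is balanced to within $c$ (the top value of $A_j$) rather than within $m$. I would argue via an exchange argument: take a partition $(B_1,B_2)\in\Pi_2(A_i\cup A_j)$ attaining $\bmu_i(2,A_i\cup A_j)$ with $v_i(B_1)\le v_i(B_2)$; optimality implies that moving any single $x\in B_2$ into $B_1$ cannot raise the minimum, forcing $v_i(x)\ge v_i(B_2)-v_i(B_1)$ for every $x\in B_2$. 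Hence the imbalance $v_i(B_2)-v_i(B_1)$ is at most the smallest item value in $B_2$.

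It then remains to locate a cheap item in $B_2$, and this is exactly where the standing assumption $a<d$ enters: if $B_2$ contained no item of $A_j$, then $A_j\subseteq B_1$ would give $v_i(B_1)\ge v_i(A_j)=c+d>d>a\ge v_i(B_2)$, contradicting $v_i(B_1)\le v_i(B_2)$. Thus $B_2$ contains some item of $A_j$, whose value is at most $c$, so the imbalance is at most $c$ and
\[
\bmu_i(2,A_i\cup A_j)=v_i(B_1)=\frac{(a+c+d)-\bigl(v_i(B_2)-v_i(B_1)\bigr)}{2}\ge\frac{a+d}{2}\,,
\]
as needed. Combining this with the \pmms hypothesis yields $a\ge\frac{\alpha}{2}(a+d)$, hence $a\ge\frac{\alpha}{2-\alpha}d$ for every pair, i.e.\ the allocation is $\frac{\alpha}{2-\alpha}$-\efo.

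For tightness I would exhibit, for each $\alpha\in(0,1)$, an instance together with an allocation that is exactly $\alpha$-\pmms yet no better than $\frac{\alpha}{2-\alpha}$-\efo for some pair; the construction should make both inequalities above tight simultaneously, i.e.\ the \pmms constraint hold with equality ($a=\alpha\,\bmu_i(2,A_i\cup A_j)$) and the optimal $2$-partition be imbalanced by exactly $c$. A single ``victim'' agent holding a bundle of value $a$, facing an agent who holds one item of value $c$ together with several equal items summing to $d=\frac{2-\alpha}{\alpha}\,a$, with the remaining agents rendered non-envious by assigning them highly valued bundles, is the natural candidate, and it extends to any $n\ge 2$ by padding with such agents.
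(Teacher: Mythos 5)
Your proof of the positive direction is correct, and it takes a genuinely different route from the paper's. The paper also removes $g\in\argmax_{h\in A_j}v_i(h)$, but then splits into two cases according to whether $v_i(g)\ge\bmu_i(2,A_i\cup A_j)$: when it is, the maximin-defining partition is identified explicitly (yielding the stronger factor $\frac{\alpha}{1-\alpha}$ in that case); when it is not, the paper sorts the items of $A_j$ and greedily grows $A_i$ into a set $S$ of value at most $\bmu_i(2,A_i\cup A_j)+v_i(g)$, deriving a contradiction with the definition of the pairwise maximin share and concluding $v_i(A_j)-v_i(g)\le(2-\gamma)\bmu_i(2,A_i\cup A_j)$. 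You replace all of this by the single inequality $\bmu_i(2,A_i\cup A_j)\ge\frac{a+d}{2}$ (under $a<d$), proved by a local exchange argument: in an optimal $2$-partition the imbalance is at most the value of any positively-valued item of the heavier side, and $a<d$ forces the heavier side to contain an item of $A_j$, hence one of value at most $c$. Rewriting the target as $a\ge\frac{\alpha}{2}(a+d)$ then finishes the argument in one line, with no case analysis; what the paper's route buys in exchange is the improved guarantee $\frac{\alpha}{1-\alpha}$ in its first case, which your argument does not exhibit (and which the proposition does not need).

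Two caveats. First, your exchange step as stated, ``$v_i(x)\ge v_i(B_2)-v_i(B_1)$ for every $x\in B_2$,'' is false for items with $v_i(x)=0$: moving such an item changes neither side, so optimality forces nothing. The paper handles this by assuming, without loss of generality for the pair under consideration, that no item has $v_i(h)=0$; alternatively, your own contradiction argument already produces a \emph{positively-valued} item of $A_j$ in $B_2$, since if all items of $A_j\cap B_2$ were worthless to $i$ then $v_i(B_1)\ge c+d>a\ge v_i(B_2)$. Second, the tightness part is only a sketch, and the one quantitative detail it leaves open is the crucial one: the distinguished item of value $c$ must not be larger than the equal items summing to $d$. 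Indeed, if $c$ strictly exceeds every other item of $A_j$, your two tight constraints cannot hold simultaneously (the exchange bound forces every item of $A_j$ in the heavier side to have value exactly $c$, which then forces $d\le a$); and in the natural fine-grained candidates, making the \pmms constraint tight with a large $c$ yields an \efo factor of about $\frac{\alpha}{2-\alpha}\cdot\frac{c+d}{d}$, or $\frac{\alpha}{1-\alpha}$ once $c\ge a+d$, both worse than required. The clean instantiation --- essentially the paper's example --- takes $A_j$ to consist of $k$ equal items of value $\frac{2-\alpha}{k}$ (plus a tiny $\epsilon$-item) against a bundle $A_i$ of value $\alpha$, padded with non-envious agents; then $c$ vanishes as $k$ grows, the optimal $2$-partition is balanced up to $c$, and the \efo factor tends to $\frac{\alpha}{2-\alpha}$.
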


\begin{proof}
	Let $\mathcal{A} = (A_1, \ldots, A_n)$ be an $\alpha$-\pmms allocation. Consider agents $i, j$ and let $v_i(A_i)= \gamma \bmu_i(2, A_i \cup A_j)$ for some $\gamma \ge \alpha$. We focus on agent $i$, so, without loss of generality, we may assume that there is no item $h$ such that $v_i(h)=0$. Notice that when $|A_j|\le 1$, either there is nothing to check or we need to check that agent $i$ does not envy the empty bundle (which is, of course, true). So, we assume that $|A_j| \geq 2$. 
	Also, we may assume that $\gamma\le 1$, or else $v_i(A_j)< \bmu_i(2, A_i \cup A_j)$ and agent $i$ does not envy agent $j$ anyway.  
	Suppose that $g \in \argmax_{h \in A_j}v_i(h)$ and consider the following two cases: \medskip
	
	\noindent\textit{Case 1}: $v_i(g) \geq \bmu_i(2, A_i \cup A_j)$. Notice that by the definition of $\bmu_i(2, A_i \cup A_j)$, we must have that $\bmu_i(2, A_i \cup A_j)= v_i(A_i)+v_i(A_j)-v_i(g)$. Thus,  
	$v_i(A_j)-v_i(g)= \bmu_i(2, A_i \cup A_j) - v_i(A_i) = (1-\gamma)\bmu_i(2, A_i \cup A_j)$.
	If $\gamma =1$ then clearly $v_i(A_i)\ge v_i(A_j)-v_i(g) = 0$. 
	Otherwise, we have  
	\[v_i(A_i)= \gamma \bmu_i(2, A_i \cup A_j) =\frac{\gamma}{1-\gamma} (v_i(A_j)-v_i(g)) \ge \frac{\alpha}{1-\alpha} (v_i(A_j)-v_i(g))\,.\]
	
	\noindent\textit{Case 2}: $v_i(g)=\beta < \bmu_i(2, A_i \cup A_j)$. Initially we prove that $v_i(A_j)\le\beta + 2\bmu_i(2, A_i \cup A_j)-\gamma\bmu_i(2, A_i \cup A_j)$. Suppose for a contradiction that $v_i(A_j) > \beta + 2\bmu_i(2, A_i \cup A_j)-\gamma\bmu_i(2, A_i \cup A_j)$.

	
	Now we sort the items in $A_j$ in an increasing order according to their value regarding agent $i$, say $h_1, h_2,\ldots, h_k=g$. Start with set $A_i$ and add items from set $A_j$, one at a time, until the value exceeds $\bmu_i(2, A_i \cup A_j)$, i.e., find an index $l$ such that $v_i(A_i \cup \{h_1, h_2,\ldots,h_{l}\})>\bmu_i(2, A_i \cup A_j)$ but $v_i(A_i \cup \{h_1, h_2,\ldots,h_{l-1}\})\leq\bmu_i(2, A_i \cup A_j)$ (slightly abusing notation here, $\{h_1, \ldots,h_{l-1}\}$ denotes the empty set when $l=1$). Let  $S=A_i\cup \{h_1, h_2,\ldots,h_{l}\}$. Since $v_i(S\mysetminus \{h_l\})\le\bmu_i(2, A_i \cup A_j)$ and $v_i(h_l) \le v_i(g)=\beta$ we have that $v_i(S)\le\bmu_i(2, A_i \cup A_j)+\beta$ and thus $v_i(\{h_1, h_2,\ldots,h_{l}\}) \le \bmu_i(2, A_i \cup A_j)+\beta-v_i(A_i)$. Therefore
	\begin{IEEEeqnarray*}{rCl}
		v_i(A_j\mysetminus S) & = & v_i(A_j)-  v_i(\{h_1, h_2,\ldots,h_{l}\}) \ge  v_i(A_j)-\bmu_i(2, A_i \cup A_j)-\beta+v_i(A_i)\\
		& > & 2\bmu_i(2, A_i \cup A_j)+\beta-\gamma\bmu_i(2, A_i \cup A_j)-\bmu_i(2, A_i \cup A_j)-\beta+ v_i(A_i)\\
		& = & \bmu_i(2, A_i \cup A_j) \,.
	\end{IEEEeqnarray*} 
	%
	%
	This means that in the partition $(S, A_j\mysetminus S)$  both sets are strictly better than $\bmu_i(2, A_i \cup A_j)$ for agent $i$; this contradicts the definition of $\bmu_i(2, A_i \cup A_j)$. 
	
	So it must be the case where $v_i(A_j)\le\beta + 2\bmu_i(2, A_i \cup A_j)-\gamma\bmu_i(2, A_i \cup A_j)$. Therefore $v_i(A_j)-v_i(g)=v_i(A_j)-\beta \le (2-\gamma)\bmu_i(2, A_i \cup A_j)$, and so 
	\[v_i(A_i)=\gamma\bmu_i(2, A_i \cup A_j)\ge\frac{\gamma}{2-\gamma} (v_i(A_j)-v_i(g))\ge \frac{\alpha}{2-\alpha} (v_i(A_j)-v_i(g))\,.\]
	
	We conclude that an $\alpha$-\pmms allocation leads to an $\frac{\alpha}{2-\alpha}$-\efo allocation since $\frac{\alpha}{2-\alpha} \le \frac{\alpha}{1-\alpha}$ for any $\alpha \in(0,1)$.
	
	Regarding tightness, consider the following example: We have an instance with $n=2$ agents and $2k+1\geq 5$ items. Let $\alpha \in (0,1)$. We focus on agent $1$ and we have 
	\[ 
	v_1(g_j)= \left\{
	\begin{array}{ll}
	\frac{\alpha}{k} & 1 \leq j\leq k \vspace{2pt}\\
	\frac{2-\alpha}{k} &k+1 \leq j\leq 2k  \\
	\epsilon &j=2k+1  
	\end{array} 
	\right. 
	\]
	where $\epsilon$ is an arbitrarily small positive amount. Now consider $\mathcal{A} = (A_1, \ldots, A_n)=(\{g_1,\ldots,g_k\}, \{g_{k+1}, \allowbreak \ldots, \allowbreak g_{2k+1}\})$ and assume that agents $2$ is not envious. It is easy to see that accoriding to agent 1, this is an $\alpha$-\pmms allocation which is also however, only an $\frac{\alpha}{2-\alpha}$-\efo allocation. It is straightforward to generalize this instance to any number of agents.
\end{proof}

\begin{proposition}\label{prop:a_mms_to_efo}
	For $n\ge 3$, an $\alpha$-\mms  allocation is not necessarily a $\beta$-\efo allocation for any  $\alpha, \beta \in (0,1]$. 
\end{proposition}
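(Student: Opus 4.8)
The plan is to establish the separation by exhibiting a single instance together with an exact \mms allocation whose \efo approximation guarantee can be driven to $0$. First I would reduce to the case $\alpha = 1$: since an exact \mms allocation is in particular an $\alpha$-\mms allocation for every $\alpha \in (0,1]$, it suffices to produce an instance and an allocation $\mathcal{A}$ that is \mms but is only a $\gamma$-\efo allocation for $\gamma$ arbitrarily small. Such a witness simultaneously refutes $\beta$-\efo for every fixed $\beta \in (0,1]$ and every $\alpha$, which is exactly the claim.

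For the construction I would reuse essentially the instance of Proposition \ref{prop:a_mms_to_pmms}. Take $n \ge 3$ agents and focus on agent $1$, who values $n-1$ of the items at $1$, one item $g_1$ at $\epsilon$, and any remaining items at $0$, where $\epsilon > 0$ is arbitrarily small; all other agents are assumed to have large value for their own bundle and negligible value for everything else. The purpose of this valuation is a pigeonhole observation: in any $n$-partition of $M$ at least one bundle contains none of the $n-1$ unit-value items, so that bundle is worth at most $\epsilon$ to agent $1$, whence $\bmu_1 = \epsilon$. I would then give agent $1$ the bundle $\{g_1\}$, place all $n-1$ unit-value items in a single bundle $A_2$ belonging to some other agent, and distribute the worthless items one per remaining agent.

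Verifying the two properties is routine. The allocation is \mms because agent $1$ receives value exactly $\epsilon = \bmu_1$, while every other agent is satisfied by the standing assumption on their valuations. For the \efo failure, the bundle $A_2$ consists of $n-1$ items each of value $1$ to agent $1$, so removing any single item from $A_2$ still leaves value $n-2 \ge 1$, whereas $v_1(A_1) = \epsilon$. Hence for the pair $(1,2)$ the best achievable \efo ratio is $\epsilon/(n-2)$, which tends to $0$ as $\epsilon \to 0$. Consequently $\mathcal{A}$ is not $\beta$-\efo for any fixed $\beta \in (0,1]$, and I would remark that the example generalizes to any larger number of agents by adding further agents with their own well-valued bundles.

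There is no genuine obstacle here; this is a construction rather than an inequality to be fought for. The only points requiring a little care are confirming that $\bmu_1$ equals $\epsilon$ and not something larger (which follows from the pigeonhole argument above) and ensuring the example is valid for every $n \ge 3$, where the bound $n-2 \ge 1$ is exactly what makes the removed-item residual nonnegligible. Conceptually, the separation reflects the fact that \mms is a coarse, global guarantee that becomes vacuous when an agent's maximin share is tiny, whereas \efo is a pairwise comparison that can remain large precisely in such degenerate instances.
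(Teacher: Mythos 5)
Your proposal is correct and follows essentially the same route as the paper: the paper's own proof simply reuses the instance from Proposition~\ref{prop:a_mms_to_pmms} (agent $1$ values $n-1$ items at $1$, one item at $\epsilon$, the rest at $0$; she gets $\{g_1\}$ while all unit-value items go to one other agent) and makes the identical observation that $v_1(A_2\mysetminus\{g\})=n-2\ge 1$ for any $g\in A_2$, so the allocation is \mms but only $\gamma$-\efo for $\gamma\le\epsilon$. Your additional care about the reduction to $\alpha=1$ and the pigeonhole computation of $\bmu_1=\epsilon$ matches what the paper leaves implicit.
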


\begin{proof}
We may use the  same example as in Proposition \ref{prop:a_mms_to_pmms}. Like before, $\mathcal{A}$ is an \mms allocation but  only a $\gamma$-\efo allocation for $\gamma\le\epsilon$. To see this, notice that $v_1(A_2\mysetminus\{g\})=n-2\ge 1$ for any $g\in A_2$.
\end{proof}

When one focuses on guarantees in terms of \efx, there are  only negative results. In fact, it is rather surprising that even though \pmms implies \efx, an $\alpha$-\pmms allocation with $\alpha<1$ does not imply \textit{any} approximation in terms of \efx.

\begin{proposition}\label{prop:a_pmms_to_efx}
	For $n\ge 2$, an $\alpha$-\pmms  allocation is not necessarily a $\beta$-\efx allocation for any  $\alpha, \beta \in (0,1)$.
\end{proposition}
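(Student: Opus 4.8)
The plan is to exploit the slack in the $\alpha$-\pmms condition with a two-agent instance (later padded with dummy agents) in which agent~$2$'s bundle is dominated by a single huge item. The key observation is that the \pmms comparison between agents $1$ and $2$ is governed by $\bmu_1(2,A_1\cup A_2)$, and a single dominant item forces this quantity to equal the value of \emph{everything except} that item; an \efx comparison, by contrast, only deletes one item from $A_2$, so the huge item survives and inflates $v_1(A_2\mysetminus\{g\})$ without bound. Since the \pmms bound will be independent of the size of the dominant item while the \efx bound grows with it, the two guarantees can be pried arbitrarily far apart.

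Concretely, I would handle $n=2$ first, with $A_1=\{g_1\}$ and $A_2=\{g_2,g_3\}$, and set (for agent~$1$) $v_1(g_1)=1$, $v_1(g_2)=V$ with $V$ large, and $v_1(g_3)=\epsilon$ small. First I verify the maximin computation: once $V$ exceeds the total value of the remaining items, the optimal $2$-partition of $A_1\cup A_2$ isolates $g_2$, so $\bmu_1(2,A_1\cup A_2)=1+\epsilon$. Hence $\alpha$-\pmms for agent~$1$ reduces to $1\ge\alpha(1+\epsilon)$, which holds whenever $\epsilon\le(1-\alpha)/\alpha$, i.e.\ for all sufficiently small $\epsilon$ since $\alpha<1$. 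Agent~$2$ is given a valuation placing essentially all of her value on $A_2$ (and $0$ on $g_1$), so $v_2(A_2)=v_2(A_1\cup A_2)>\bmu_2(2,A_1\cup A_2)$ and her \pmms constraint is satisfied with room to spare; this covers both ordered pairs.

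Next I check that \efx fails by an unbounded margin. Removing the least valuable positive item $g_3$ from $A_2$ leaves $v_1(A_2\mysetminus\{g_3\})=V$, so the $\beta$-\efx requirement $v_1(A_1)\ge\beta\, v_1(A_2\mysetminus\{g_3\})$ becomes $1\ge\beta V$. Given any target $\beta\in(0,1)$, choosing $V>1/\beta$ (and $V>1+\epsilon$, to keep the maximin computation valid) violates it. Thus the same allocation is $\alpha$-\pmms yet not $\beta$-\efx, for every prescribed $\alpha,\beta\in(0,1)$.

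Finally I extend to $n\ge 3$ by the standard padding used elsewhere in the paper: append dummy agents $3,\dots,n$, each holding a private dummy item that only she values and that is worthless to agents $1$ and $2$ and to the other dummies. This leaves $A_1,A_2$ and all relevant values of agent~$1$ untouched, so the \efx violation for the pair $(1,2)$ persists, while every pairwise \pmms constraint involving a dummy is trivially met because each agent owns the entire value of the combined pair from her own viewpoint. The only delicate point—and the main technical obstacle—is the maximin bookkeeping: one must confirm that the single dominant item $g_2$ really determines $\bmu_1(2,A_1\cup A_2)$, so that the \pmms bound stays independent of $V$, and, relatedly, that $A_2$ must contain at least two items, since with $|A_2|=1$ the \efx comparison deletes the only item and becomes vacuous.
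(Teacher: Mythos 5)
Your construction is correct and is essentially the paper's own argument: both proofs place one huge item of value $V$ together with one small positive-value item in the envied bundle, so that the pairwise maximin partition isolates the huge item (keeping $\bmu_i(2,A_i\cup A_j)$ bounded, hence the $\alpha$-\pmms constraint satisfiable), while the \efx comparison deletes only the small item, leaving value $V$ and driving the attainable ratio down to $1/V$. The paper's version uses $n$ identical agents (envied bundle $\{V,\tfrac{1}{\alpha}-1\}$ and unit singletons for everyone else) instead of your two asymmetric agents plus dummy padding, but that difference is purely cosmetic.
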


\begin{proof}
	Consider an instance with $n \geq 2$ identical agents and $n+1$ items. Let $\alpha \in (0,1)$ and $V\gg max\{1, \frac{1}{\alpha}-1\}$. For every agent $i$ we have 
	\[ 
	v_i(g_j)= \left\{
	\begin{array}{ll}
	V & j=1\\
	\frac{1}{\alpha}-1 & j=2\\
	1 & 3 \leq i\leq n+1 \\
	\end{array} 
	\right. 
	\]
	We examine the allocation $\mathcal{A}=(A_1, \ldots, A_n)=(\{g_1, g_2\}, \allowbreak \{g_{3}\}, \allowbreak \{g_{4}\}, \ldots, \{g_{n+1}\})$. For any $i\neq 1$, we focus on agents $i$ and $1$ from $i$'s viewpoint. It is easy to see that $\bmu_{i}(2, A_1 \cup A_{i})=\frac{1}{\alpha}-1+1=\frac{1}{\alpha}$.  Since $v_i(A_i)=1$, we get  $v_i(A_i)=\alpha \bmu_{i}(2, A_1 \cup A_{i})$. 
	As it is straightforward  to see that for any other pair of agents that there is no envy, we have that $\mathcal{A}$ is an $\alpha$-\pmms allocation. On the other hand, every agent $i \in N\mysetminus\{1\}$ envies agent $1$ up to any item by a factor $\gamma=\frac{1}{V}$ which can become arbitrarily small for sufficiently large $V$.
\end{proof}

By Propositions \ref{prop:a_pmms_to_efx},  \ref{prop:a_mms_to_efo}, and \ref{prop:efx_to_efo}, we also have the corresponding result for approximate \mms allocations. 
\begin{corollary}\label{cor:a_mms_to_efx}
	For $n\ge 2$, an $\alpha$-\mms allocation is not necessarily a $\beta$-\efx allocation for any  $\alpha, \beta \in (0,1)$.
\end{corollary}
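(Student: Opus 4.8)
The plan is to derive Corollary~\ref{cor:a_mms_to_efx} by composing the two negative results that precede it, using the chain \efx $\Rightarrow$ \efo (Proposition~\ref{prop:efx_to_efo}) as a bridge. The conceptual idea is a contrapositive argument: if an $\alpha$-\mms allocation were always a $\beta$-\efx allocation for some fixed $\alpha,\beta\in(0,1)$, then by Proposition~\ref{prop:efx_to_efo} it would also always be a $\beta$-\efo allocation (since an $\alpha'$-\efx allocation is an $\alpha'$-\efo allocation). This would contradict Proposition~\ref{prop:a_mms_to_efo}, which asserts that for $n\ge 3$ an $\alpha$-\mms allocation need not be a $\beta$-\efo allocation for \emph{any} $\alpha,\beta\in(0,1]$. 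Hence no such $\beta$ can exist.

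First I would note the range discrepancy: Proposition~\ref{prop:a_mms_to_efo} is stated for $n\ge 3$, whereas the corollary claims $n\ge 2$. For $n\ge 3$ the composition above settles it immediately. For the boundary case $n=2$, the notions of \mms and \pmms coincide (as observed in Section~\ref{sec:(P)MMS}), so I would instead invoke Proposition~\ref{prop:a_pmms_to_efx}, which holds for $n\ge 2$ and shows an $\alpha$-\pmms allocation need not be $\beta$-\efx; since for two agents an $\alpha$-\mms allocation \emph{is} an $\alpha$-\pmms allocation, the same counterexample transfers directly. In fact, the cleanest route for all $n\ge 2$ is to observe that the instance constructed in the proof of Proposition~\ref{prop:a_mms_to_pmms} already does double duty: there the allocation $\mathcal{A}$ is an exact \mms allocation, and one checks that $v_1(A_2\mysetminus\{g\})=n-2$ (used in Proposition~\ref{prop:a_mms_to_efo}) and the envy up to any good degrades like $1/V$ or $\epsilon$, so the allocation fails $\beta$-\efx for every fixed $\beta$.

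Concretely, the proof I would write is short: reduce to $\alpha=1$ (a $1$-\mms allocation is the strongest, so a counterexample there rules out all weaker $\alpha$), then cite that Propositions~\ref{prop:a_pmms_to_efx}, \ref{prop:a_mms_to_efo}, and~\ref{prop:efx_to_efo} combine to give the claim, exactly as the paper's corollary statement indicates. The one-line logical skeleton is: \mms $\not\Rightarrow$ $\beta$-\efo (Prop.~\ref{prop:a_mms_to_efo}) together with \efx $\Rightarrow$ \efo (Prop.~\ref{prop:efx_to_efo}) forces \mms $\not\Rightarrow$ $\beta$-\efx, supplemented by the $n=2$ case handled via the \mms/\pmms coincidence and Proposition~\ref{prop:a_pmms_to_efx}.

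The main obstacle here is not mathematical depth but careful bookkeeping of parameter ranges and the direction of the implications. The subtle point is that one cannot directly chain Proposition~\ref{prop:a_pmms_to_efx} (about \pmms) into a statement about \mms for general $n$, because \mms does not imply \pmms (indeed Proposition~\ref{prop:a_mms_to_pmms} shows the opposite); the coincidence \mms $\equiv$ \pmms only rescues the $n=2$ case. So the real care is in choosing, for each regime of $n$, the correct precursor proposition and verifying that the shared counterexample genuinely is an exact \mms allocation while simultaneously failing \efx by an arbitrarily small factor — a check that is already implicitly contained in the proofs of Propositions~\ref{prop:a_mms_to_pmms} and~\ref{prop:a_mms_to_efo}.
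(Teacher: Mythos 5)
Your proposal is correct and takes essentially the same route as the paper, which likewise derives the corollary by combining Propositions~\ref{prop:a_pmms_to_efx}, \ref{prop:a_mms_to_efo}, and~\ref{prop:efx_to_efo} --- the contrapositive \efx-to-\efo bridge for $n\ge 3$, plus the \mms/\pmms coincidence together with Proposition~\ref{prop:a_pmms_to_efx} for $n=2$. One caveat: your aside that the instance from Proposition~\ref{prop:a_mms_to_pmms} handles ``all $n\ge 2$'' fails at $n=2$, since there $A_2$ is a singleton and $v_1(A_2\mysetminus\{g\})=0$, so \efx is trivially satisfied; this does not harm your proof, whose main argument treats $n=2$ separately.
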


\section{Discussion}
\label{sec:concl}

The main purpose of this work is to provide a deeper understanding of the connections between the exact and the approximate versions of four prominent fairness notions used in discrete models of fair division. In most cases we give tight, or almost tight, results, providing therefore a close to complete picture on the status of these questions. 
Some of our findings also strike as counter-intuitive, given what was known for the exact versions of these notions. 

A direct implication of our results is the non-existence of \emph{truthful} allocation mechanisms for two agents that achieve any constant approximation of \pmms \efx, or even \efo. This follows from the corresponding negative result of Amanatidis et al.~\cite{ABCM17} for \mms and Propositions \ref{prop:a_efx_to_mms}, \ref{prop:efo_to_mms}, and \ref{prop:a_pmms_to_mms}. We defer a more detailed discussion on this to the full version of the paper.

Aside from the questions raised here, there is an abundance of interesting open problems for future research. Deciding the existence of exact \efx or exact \pmms allocations seem to be two of the most puzzling such problems. Furthermore, obtaining (algorithmically or existentially) allocations with better approximation ratios is another class of equally intriguing problems. So far, we know there exist allocations with ratios of $0.75$, $0.618$, and $0.5$, for \mms, \pmms, and \efx respectively, out of which, the result for \pmms is existential; our Corollary \ref{col:rr} only yields a $0.5$-approximation in polynomial-time.
Finally,  the combination of fairness with other desired objectives, like Pareto optimality, creates further algorithmic challenges, even for the seemingly easier notion of \efo \cite{BMV18}.  

Interestingly enough, our results suggest that improving the current state of the art in the approximation of one of these notions does not imply an immediate improvement on the best approximation ratio for the others, with the notable exception of $\alpha$-\efx. An algorithmic result with $\alpha>2/3$ or an existential result with $\alpha>0.895$  for $\alpha$-\efx would imply an improved algorithmic or existential result for $\beta$-\pmms.

\newpage

\bibliographystyle{plain}
\bibliography{fairdivrefs_ijcai_full}

\end{document}